\documentclass[preprint,12pt,compress]{elsarticle}
\usepackage{lineno}
\usepackage{amsmath,amsfonts}
\allowdisplaybreaks
\usepackage{algorithmic}
\usepackage{array}
\usepackage{textcomp}
\usepackage{stfloats}
\usepackage{url}
\usepackage{verbatim}
\usepackage{graphicx}
\usepackage{makecell}
\usepackage{epstopdf}
\usepackage{dcolumn}
\usepackage{bm}
\usepackage{natbib}
\setcitestyle{numbers,square}
\usepackage{braket}
\usepackage{color}
\usepackage{algorithm}
\usepackage{balance}
\usepackage{booktabs}
\usepackage{amssymb}
\usepackage{amsthm}
\usepackage{caption}
\usepackage{tikz}
\usepackage{adjustbox}
\usepackage{titlesec}
\usepackage{quantikz}
\usepackage{hyperref}

\theoremstyle{plain}
\newtheorem{theorem}{Theorem}

\newtheorem{lemma}{Lemma}
\newtheorem{proposition}{Proposition}

\theoremstyle{definition}
\newtheorem{definition}{Definition}

\newtheorem{remark}{Remark}

\begin{document}

\begin{frontmatter}



\title{Low Depth Distributed Quantum Algorithms\\ for Unordered Database Search} 


\author[mymainaddress,mythirdaddress]{Huaijing Huang}
\author[mymainaddress,mythirdaddress]{Daowen Qiu\corref{mycorrespondingauthor}}
\ead{issqdw@mail.sysu.edu.cn}
\cortext[mycorrespondingauthor]{Corresponding author}
\author[mymainaddress,mythirdaddress]{Ximing Hua}
\author[mymainaddress,mythirdaddress]{Xinyu Chen}

\affiliation[mymainaddress]{organization={School of Computer Science and Engineering},
            addressline={Sun Yat-sen University}, 
            city={Guangzhou},
            postcode={510006}, 
            country={China}}
\affiliation[mythirdaddress]{organization={The Guangdong Key Laboratory of Information Security Technology},
            addressline={Sun Yat-sen University}, 
            city={Guangzhou},
            postcode={510006}, 
            country={China}}
\begin{abstract}
Grover's algorithm accelerates unstructured database search quadratically compared to  classical algorithms. In the NISQ era, distributed quantum computing can decrease circuit depth and reduce noise.  In this paper, an algorithm for constructing  query operators for subfunctions is proposed.  By dividing the target string of the search problem into several substrings and integrating the query operator of each subfunction, a low-depth distributed exact quantum search algorithm is designed. The contributions of this paper are as follows: 
(1) The proposed distributed algorithm has a lower circuit depth and can mitigate error accumulation compared to distributed quantum search algorithms; (2) The target can be accurately located by the proposed distributed algorithm; (3) Experiments conducted with the quantum software MindQuantum confirm the effectiveness and feasibility of the proposed distributed algorithm. Moreover, the introduction of  noise to the circuit during these experiments indicates that the algorithm possesses an inherent capacity for noise resistance.
\end{abstract}

\begin{keyword}
Distributed quantum algorithms, Noisy intermediate-scale quantum (NISQ) era, Quantum search algorithms
\end{keyword}

\end{frontmatter}


\section{Introduction}
Quantum computing has become a highly sought-after field of research globally.  Compared with classical computing, quantum computing is more superior, but being in the NISQ (noisy intermediate-scale quantum) era 
\cite{Preskill2018}, quantum computers suffer from insufficient qubits and difficulty in isolating noise. To solve these problems, distributed quantum computing is a better method that can simulate large-scale circuits  and reduce noise\cite{Li2023}.  The distributed quantum algorithms, which serve as the core of distributed quantum computing, have attracted considerable interest among researchers.  The distributed quantum algorithms have been designed from various perspectives,  including Boolean function decomposition,  problem decomposition, and circuit decomposition \cite{Li2017,Avron2021,TAN_quantum_2022,Qiu2022,Xiao2023,Li2024,Le2019}.

In recent years, distributed quantum algorithms have been developed rapidly. In 2017, Li and Qiu et al. proposed a distributed phase estimation algorithm that requires only classical communication\cite{Li2017}. In 2020, Izumi and Le Gall et al. demonstrated a  distributed quantum algorithm for solving the triangle finding problem in the CONGEST model 
\cite{Le2019}. In the same year, Avron and Casper et al. studied the quantum advantages in distributed quantum computing \cite{Avron2021}. In 2022, Qiu and Luo et al. introduced a distributed Grover's search algorithm \cite{Qiu2022}, and  Tan and Qiu et al. proposed a distributed Simon's algorithm \cite{TAN_quantum_2022}.  In 2023, Xiao and Qiu et al. presented a distributed Shor's algorithm \cite{Xiao2023} and distributed generalized Simon's algorithm 
\cite{Li2024}. Andres-Martinez and Forrer et al. explored the implementation of distributed quantum circuits in modular quantum computing network architectures \cite{P2024}. Qiu et al. proposed an  error correction method that can be applied to distributed quantum computing \cite{qiu2025universal}. The distributed generalized Deutsch-Jozsa algorithm has also been proposed \cite{li2024distributed}.

 Grover's algorithm \cite{Grover1996} was initially proposed to solve the problem of searching in an unsorted database, but it does not always search the target with a success probability of 1. Later, Long improved upon  Grover's algorithm and proposed an exact search algorithm, which can search the target with success probability of 1\cite{Long2001}.   Another extension on Grover's algorithm is amplitude amplification \cite{Brassard2002},  which can be applied as a subroutine to other quantum algorithms \cite{Ambainis2004, Biamonte2017}.

Quantum partial search algorithm only obtains partial information of the target. It was first proposed by Grover and Radhakrishnan \cite{GRK_2005} in 2005.  They divided the entire database  into several equal blocks and iteratively locked the target block (containing the target item) using global and local search operators. The global search operator is performed on the entire database and is actually the search operator in primitive Grover's algorithm. The local search operator is performed in parallel on each block. 
Korepin \cite{Korepin_2005} then optimized the number of queries for the quantum partial search algorithm.  Subsequently, other researchers have improved and streamlined certain partial search algorithms from other perspectives \cite{Choi_2007,Zhang_2018}.  In 2020, Zhang and Korepin \cite{Zhang_2020} proposed an optimized Grover’s search algorithm in terms of depth.

 Considering that deep quantum circuits are harder to implement on NISQ devices, we  would design an exact distributed quantum algorithm to reduce the circuit depth of  Grover’s search algorithm. Although Ref. \cite{Zhou2023} presented an exact distributed Grover’s search algorithm, their approach is based on the construction of an OR function with a $2^{n-2}(n\ge2,n\in\mathbb{N})$  input variables, which requires high  quantum query complexity of $\Omega(2^{n-2})$ \cite{De2002}. Moreover, such query operators are difficult to be implemented in circuits for practical applications.   We employ a more feasible approach to design an exact distributed quantum algorithm for solving the unordered database search problem. Our algorithm does not rely on such OR functions that  are exponential-level query complexity.

The paper is structured in the following manner. In Section \ref{s2}, two quantum search algorithms will be reviewed briefly:   the modified Grover’s algorithm by Long and  GRK quantum partial search algorithm.   An exact quantum partial search algorithm are proposed in Section \ref{s3}. In Section \ref{s4},  an algorithm for constructing a query operator from Boolean functions to their subfunctions is presented first. Then we design a  low-depth distributed exact search algorithm  to extract the target.   We also prove the correctness of the algorithm and  compare our algorithm to existing exact quantum search algorithms. The proposed algorithm is subjected to both noisy and noiseless experiments in Section \ref{s5}.  Conclusions and  prospects are presented in Section \ref{s6}.

\section{Preliminaries}\label{s2}
In this section,  we  begin with a brief description of   modified Grover’s algorithm by Long \cite{Long2001}, followed by GRK quantum partial search algorithm \cite{Korepin_2005}.

 \subsection{ Modified Grover’s algorithm by Long}
 Due to the fact that Grover's  algorithm \cite{Grover1996} does not always find the target with a probability of 1, Long improved upon this algorithm by proposing an exact search algorithm \cite{Long2001}.
 Below, we provide a detailed introduction to the modified Grover’s algorithm by Long.

Assume an unsorted database with $N = 2^n$ items, out of which {an} item is marked. Our goal is to find the marked item.   The corresponding Boolean function for this problem can be constructed as follows $f:\{0,1\}^n \rightarrow \{0,1\}$, with $\lvert\{ t \in \{0,1\}^n|f(t) = 1 \}\rvert=1$. The corresponding question becomes finding the target $t\in \{0,1\}^n$ that makes $f(t) = 1$. In quantum computing, for a given function $f$, there is an oracle $U^{\omega}_{f}$ that query
$|{x}\rangle \to e^{\imath\omega\cdot f(x)}|x\rangle$ for any $ x\in\{0,1\}^n$.
 
We denote 
$$|\hat{x}\rangle:= \frac  {1}{\sqrt{2^n-1}}\sum_{x\neq t}|x\rangle.$$
The initial state of  algorithm is\begin{equation}\label{in1}
|\varphi_n\rangle=H^{\otimes n}|0^{n}\rangle=\frac  {\sqrt{2^n-1}} {\sqrt {2^n}}|\hat{x}\rangle+\frac  {1} {\sqrt {2^n}}|t\rangle.
\end{equation}
 Let $ \lambda=\arcsin{{\frac {1}{\sqrt {2^n}}}}$.  The Eq. (\ref{in1}) then becomes \begin{equation}\label{in2}
 |\varphi_n\rangle=\cos{\lambda}|\hat{x}\rangle+\sin{\lambda}|t\rangle.
 \end{equation}
 Define the iterative operator $L$ for the modified Grover’s algorithm as  \begin{align}\label{in3}
 L &=-H^{\otimes n}\left((e^{\imath\omega
}-1)|0^{n}\rangle\langle0^{n}|+I_{n}\right)H^{\otimes n}U_{f}^{\omega}\\
  &=\left((1-e^{\imath\omega
})|\varphi_n\rangle\langle\varphi_n|-I_{n}\right)U_{f}^{\omega},\end{align}   where \begin{align*}
U_{f}^{\omega}|x\rangle &=
\begin{cases}
e^{\imath\omega
\cdot f(x)}|x\rangle, & f(x)=1,\\
|x\rangle, & f(x)=0,
\end{cases} 
 \end{align*}
 $I_{n}$ is the $2^n$-dimensional identity operator. The phase inversion angle needs to satisfy $$\omega=2\arcsin\left(\sin\left(\frac{\pi}{4\lfloor(\pi/2-\lambda)/(2\lambda)\rfloor+6}\right) / \sin  \lambda\right).$$ So we just need to apply the operator $L$ to the initial state Eq. (\ref{in2}), iterate $\lfloor(\pi/2-\lambda)/(2\lambda)\rfloor+1$ times, and finally we can obtain the target with certainty through measurement. In fact, the number of iterations of $L$ operator in the improved Grover’s algorithm by Long   can be any integer equal to or greater than $\lfloor(\pi/2-\lambda)/(2\lambda)\rfloor$.  Detailed algorithm process is described in Algorithm \ref{algo1}.
 \begin{algorithm}
\caption{ The modified Grover’s algorithm by Long}
\label{algo1}
\begin{algorithmic}
\STATE \textbf{Input}:  Boolean function $f:\{0,1\}^n \rightarrow\{0,1\}$ with $|\{x\in\{0,1\}^n|f(x)=1\}|= 1$.
\STATE \textbf{Output}: The string $t$ satisfies $f(t)=1$ with probability  1.
\STATE \textbf{Procedure:}
\STATE Step 1:  Execute the iterative operator $L$  on  initial state $|\varphi_n\rangle$ for  $\lfloor(\pi/2-\lambda)/(2\lambda)\rfloor+1$  times  .
\STATE Step 2: Measure each qubit in the $\{|0\rangle,|1\rangle\}$ basis.
\end{algorithmic}
\end{algorithm}

\subsection{GRK quantum partial search algorithm}
 
 Assume an unordered database has $N=2^n$ items, there is only one marked item that is the target. The goal is to find the first $q\in N^{+}$ bits of the target. We express this problem  in the form of a definition. 
 
 \begin{definition}\label{defi2}
Given a Boolean function $f:\{0,1\}^n \rightarrow \{0,1\}$ with $\lvert\{ t \in \{0,1\}^n|f(t) = 1 \}\rvert\ge1$,   partial search problems involve finding the substring $t_{0}t_{1}\cdots t_{q-1}$ such that  $t_{0}t_{1}\cdots t_{q-1}t'\in\{ t \in \{0,1\}^n|f(t) = 1 \}$, where $x\in \{0,1\}^{n-q}$, $1\le q \le n$.
\end{definition}
 If $q =n$, then we revert back to the problem that  Grover’s  algorithm aims to solve. Quantum partial search algorithm solves this problem, which was proposed by Grover and Radhakrishnan\cite{GRK_2005} in 2005.  Subsequently it was optimized by Korepin\cite{Korepin_2005}, which we have come to denote as  GRK quantum partial search algorithm. We consider the case of a single target, that is, $\lvert\{ t \in \{0,1\}^n|f(t) = 1 \}\rvert=1$.
  
  First,   the items $\{0,1\}^n$ in the database is divided  into $2^q$ blocks based on whether the first $q$ bits are the same, as follows: $$ \overbrace{0 \cdots000}^{q}y_1,~ \overbrace{0 \cdots001}^{q}y_2, ~\cdots, ~ \overbrace{1 \cdots111}^{q}y_{2^{q}}, y_m \in \{0,1\}^{n-q}, m=1,\cdots,2^{q}.$$
 The target block is the one containing the target item,  we only need to find the  block where  $t_{0}t_{1}\cdots t_{q-1}$ resides.  Let the target substring we want to find be denoted as \( t^* \), where \( t^* =t_{0}\cdots t_{q-1} \). 
 
 Next, we  define some operators involved in the GRK quantum partial search algorithm.
 
 The definition of  $G$ operator is as follows:  \begin{equation}G=U_{n}U_{f},\end{equation}  where \begin{equation}U_{n}=2|\varphi_n\rangle\langle\varphi_n|-I_{n},\end{equation}
 \begin{equation}U_{f}|x\rangle=(-1)^{f(x)}|x\rangle, \forall x\in\{0,1\}^{n},\end{equation}$I_{n}$ is the $2^n$-dimensional identity operator.
 
 $G_1$ operator is defined as follows: \begin{equation} \label{def G2}
G_1=(I_q\otimes U_{{n-q}})U_{f},
\end{equation}
where \begin{equation}U_{n-q}=2|\varphi_{n-q}\rangle\langle\varphi_{n-q}|-I_{n-q}, \end{equation}
$|\varphi_{n-q}\rangle$ is a uniform superposition state in one block,  \begin{equation}|\varphi_{n-q}\rangle={\frac{1}{\sqrt{2^{n-q}}}}\sum_{x=0}^{2^{n-q}-1}|x\rangle.\end{equation} 

The detailed process of  GRK quantum partial search algorithm is referred to Algorithm \ref{algo2} below.

\begin{algorithm}
\caption{ GRK quantum partial search algorithm}
\label{algo2}
\begin{algorithmic}
\STATE \textbf{Input}:  Boolean function $f:\{0,1\}^n \rightarrow\{0,1\}$ with $|\{x\in\{0,1\}^n|f(x)=1\}|= 1$.
\STATE \textbf{Output}: The string $t_{0}\cdots t_{q-1}$ satisfies $f(t_{0}\cdots t_{q-1}\cdots t_{n-1})=1$ with high probability.
\STATE \textbf{Procedure:}
\STATE Step 1:  Execute Grover operator $G$  on  initial state $|\varphi_n\rangle$ for $j_1$ times to obtain $|\psi_1\rangle$.
\STATE Step 2:  Execute the local Grover operator $G_1$  on state $|\psi_1\rangle$ for $j_2$ times to obtain $|\psi_2\rangle$.
\STATE Step 3:  Execute the Grover operator $G$    on state $|\psi_2\rangle$ for $1$ time  to obtain $|\psi_3\rangle$. 
\STATE Step 4: Measure the first $q$ qubits of  $|\psi_3\rangle$ yields the string $t_{0}\cdots t_{q-1}$ with high probability.
\end{algorithmic}
\end{algorithm}

 By applying $G^{j_1}$ and ${G_1}^{ j_2}$ to the initial state $|\varphi_n\rangle$, we obtain
\begin{equation}\label{a16}
{G_1}^{ j_2}G^{j_1}|\varphi_n\rangle= |t^*\rangle\otimes\left(g_{t}|t'\rangle
+ b_{t} \sum_{x\neq t'} \frac{|x\rangle}{\sqrt{2^{n-q}-1}}\right)
+   \sum_{x^{''}\neq t^*}|x^{''}\rangle\otimes \frac{\sqrt{2^{n-q}}\cos \left(\left(2 j_1+1\right) \lambda\right)}{\sqrt{2^{n}-1}}|\varphi_{{n-q}}\rangle,
\end{equation}
where $$g_{t}=\sin \left(\left(2  j_1+1\right) \lambda\right)\cos \left(2j_2\lambda'\right)+\frac{\sqrt{2^{n-q}-1}}{\sqrt{2^{n}-1}}\cos \left(\left(2  j_1+1\right) \lambda\right) \sin \left(2j_2\lambda'\right),$$
$$b_{t}=-\sin \left(\left(2  j_1+1\right) \lambda\right)\sin \left(2j_2\lambda'\right)+\frac{\sqrt{2^{n-q}-1}}{\sqrt{2^{n}-1}}\cos \left(\left(2  j_1+1\right) \lambda\right) \cos \left(2j_2\lambda'\right),$$
$$\lambda'=\arcsin{\frac {1} {\sqrt{2^{n-q}}}}.$$

To make the amplitude of the non-target blocks zero, the parameters  $j_1$ and $j_2$ in Algorithm \ref{algo2} need to meet  certain condition, which we represent with a proposition.

\begin {proposition}\label{p1}
  If $j_1$ and $j_2$ satisfy the following equation
\begin{equation}\label{eqa cancellation different}
\frac {2^{n}} {\sqrt{2^{n}-1}}\left(\frac{1} {2^{q}}-\frac 1 2\right)\cos\left((2j_1+1)\lambda\right) =-g_{t}+b_{t}\sqrt{2^{n-q}-1},
\end{equation} 
then  the amplitude of the non-target block is zero.
\end {proposition}

 The proof of Proposition \ref{p1} can be found in \ref{app1}. Korepin provided the optimized number of queries, which we denote using a proposition.
 
\begin {proposition}\label{proposition2}
 If  $q$ is a finite number, $2^{n}\rightarrow\infty$, \begin{equation}\label{628} 
 j_1 = \frac \pi 4 {\sqrt{2^{n}}}-\alpha_{q}{\sqrt {2^{n-q}}},\quad\quad   j_2 =  \beta_{q}{\sqrt {2^{n-q}}},\end{equation}
 where \begin{equation*}
\alpha_{q}=\frac{\sqrt {2^q}}{2}\arctan\left(\frac{\sqrt{3\cdot{2^q}-4}}{{2^q}-2}\right),\quad\quad \beta_{q}=\arcsin\sqrt{\frac{{2^q}}{4({2^q}-1)}},
\end{equation*}    then GRK quantum partial search algorithm  can return the  correct answer with probability 1, 
using the minimum number of queries \begin{equation*}
  \frac \pi 4 {\sqrt{2^{n}}}+ \left({\arcsin\sqrt{\frac{{2^q}}{4({2^q}-1)}}-\frac{\sqrt {2^q}}{2}\arctan\left(\frac{\sqrt{3\cdot{2^q}-4}}{{2^q}-2}\right)}\right){\sqrt {2^{n-q}}}+1.
\end{equation*}
\end {proposition}
The proof of Proposition \ref{proposition2} can be found in \ref{x2}.

\section{ Exact quantum partial search algorithm}\label{s3}
Although Ref.\cite{Choi_2007} proposed quantum partial search algorithm with success probability 1,  a rigorous mathematical proof has not been provided. They   did not provide specific expressions for the query numbers $j_1$ and $j_2$ in algorithm.  They demonstrated through numerical experiments that when $N\le10^6$, by selecting appropriate $j_1$ and $j_2$,  the algorithm's success rate reaches 1. We provide a detailed algorithm in which $j_1$ and $j_2$ are specified. We make modifications to the iterative operator in the final step of the GRK partial search algorithm, and derive an explicit analytical solution for the parameters of the final iterative operator. Moreover, we provide a theorem that explicitly states the conditions that the algorithm must satisfy to achieve accuracy. In the following, we provide a detailed introduction to this algorithm.

Assume an unordered database has $N=2^n$ items and there is only one marked item as the target. We  improve the GRK quantum partial search algorithm to accurately obtain the desired partial information. We introduce the $G_g$ operator in the final iteration of the GRK quantum partial search algorithm. 
The definition of generalized global Grover search operator $G_g$ is as follows:
\begin{equation}\label{generalizG}
G_g=\left((1-e^{\imath\theta })|\varphi_{{n}}\rangle\langle\varphi_{{n}}|-I_{{n}}\right)U_{f}^{\phi}, \quad \phi, \theta\in \mathbb{R},
\end{equation}
where   \begin{align*}
U_{f}^{\phi}|x\rangle &=
\begin{cases}
e^{\imath\phi \cdot f(x)}|x\rangle, & f(x)=1,\\
|x\rangle, & f(x)=0.
\end{cases} 
\end{align*}

It should be noted that the number of iterations in steps 1 and 2 is determined by finding the closest positive integer based on the query count of Proposition \ref{proposition2}.  Let $$ j_1^{'} =\lfloor \frac \pi 4 {\sqrt{2^{n}}}-\alpha_{q}{\sqrt {2^{n-q}}}\rceil,  j_2^{'} = \lfloor \beta_{q}{\sqrt {2^{n-q}}}\rceil.$$ For the detailed algorithm, see Algorithm \ref{algo3}.

\begin{algorithm}
\caption{ The exact quantum partial search algorithm}
\label{algo3}
\begin{algorithmic}
\STATE \textbf{Input}:  Boolean function $f:\{0,1\}^n \rightarrow\{0,1\}$ with $|\{x\in\{0,1\}^n|f(x)=1\}|= 1$.
\STATE \textbf{Output}: The string $t_{0}\cdots t_{q-1}$ satisfies $f(t_{0}\cdots t_{q-1}\cdots t_{n-1})=1$ with a probability of 1.
\STATE \textbf{Procedure:}
\STATE Step 1:  Execute Grover operator $G$  on  initial state $|\varphi_n\rangle$ for $j_1^{'}$ times to obtain $|\psi_1\rangle$.
\STATE Step 2:  Execute the local Grover operator $G_1$  on state $|\psi_1\rangle$ for $j_2^{'}$ times to obtain $|\psi_2\rangle$.
\STATE Step 3:  Execute the Grover operator $G_g$    on state $|\psi_2\rangle$ for $1$ time  to obtain $|\psi_3'\rangle$. 
\STATE Step 4: Measure the first $q$ qubits of $|\psi_3'\rangle$ yields the string $t_{0}\cdots t_{q-1}$.
\end{algorithmic}
\end{algorithm}

Algorithm \ref{algo3} can output the desired result with success probability 1 after selecting appropriate $\phi$  and $\theta$. Next, in order to obtain $t_{0}\cdots t_{q-1}$ with  success probability 1, we need to adjust the parameters $\phi$  and $\theta$.

Similar to Eq. (\ref{a16}), we have
\begin{equation}
{G_1}^{ j_2^{'}}G^{j_1^{'}}|\varphi_n\rangle= a_{t^{'}}|t^*t'\rangle
+ a_{n_{t^{'}}} \sum_{x\neq t'} \frac{|t^*x\rangle}{\sqrt{2^{n-q}-1}}
+   \sum_{x^{''}\neq t^*}|x^{''}\rangle\otimes \frac{\sqrt{2^{n-q}}\cos \left(\left(2 j_1^{'}+1\right) \lambda\right)}{\sqrt{2^{n}-1}}|\varphi_{{n-q}}\rangle,
\end{equation}
where $$a_{t^{'}}=\sin \left(\left(2  j_1^{'}+1\right) \lambda\right)\cos \left(2j_2^{'}\lambda'\right)+\frac{\sqrt{2^{n-q}-1}}{\sqrt{2^{n}-1}}\cos \left(\left(2  j_1^{'}+1\right) \lambda\right) \sin \left(2j_2^{'}\lambda'\right),$$
$$a_{n_{t^{'}}}=-\sin \left(\left(2  j_1^{'}+1\right) \lambda\right)\sin \left(2j_2^{'}\lambda'\right)+\frac{\sqrt{2^{n-q}-1}}{\sqrt{2^{n}-1}}\cos \left(\left(2  j_1^{'}+1\right) \lambda\right) \cos \left(2j_2^{'}\lambda'\right).$$

In order to make the amplitude of non-target blocks equal to  zero, we first propose the following lemma. 

We denote 
$$E' :=\sqrt{2^{n-q}-1}\cdot a_{n_{t^{'}}}+\frac{\left(2^{n}-2^{n-q}\right)\cos \left(\left(2  j_1^{'}+1\right) \lambda\right)}{\sqrt{2^{n}-1}},$$
$$F' :=\frac{\cos \left(\left(2  j_1^{'}+1\right) \lambda\right)}{\sqrt{2^{n}-1}}.$$

\begin {lemma}\label{l3}
If $E'$,  $F'$ and $a_{t^{'}}$ satisfy the  inequality 
\begin{equation}\label{gzong1}
\left(E'-{2^{n-1}F'}\right)^2\le a_{t^{'}}^2,
\end{equation}
let \begin{equation}\label{xzong14} \left\{
\begin{aligned}
&\cos\theta=\frac{2^{2n-1}F'^2-2^{n}E'F'+E'^2-a_{t^\prime}^2}{E'^2-2^{n}E'F'-a_{t^\prime}^2}, \\
&\cos\phi=\frac{2^{n-1}F'-E'}{a_{t^\prime}},\\
&a_{t^{'}}\sin\phi\sin\theta>0, 
\end{aligned}
\right.
\end{equation}
  then  $\phi$ and $\theta$   necessarily satisfy equation \begin{equation}\label{sgzong14} \left\{
\begin{aligned}
\cos \left(\phi+\frac{\theta }{2}\right) & =  \frac{-E'\cos\frac{\theta }{2}} {a_{t^{\prime}}}, \\
\sin\left(\phi+\frac{\theta}{2}\right)& =\frac{\left(\cos\theta-1 \right)E'+2^{n}F'}{2a_{t^{'}}\sin\frac{\theta}{2}}.
\end{aligned}
\right.
\end{equation}
\end{lemma}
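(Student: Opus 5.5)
The plan is to substitute the prescribed values of $\cos\theta$ and $\cos\phi$ into (\ref{sgzong14}) and reduce both equations to a single scalar identity. Write $a:=a_{t'}$ and
\[
D:=E'^2-2^{n}E'F'-a^2 .
\]
The numerator of $\cos\theta$ in (\ref{xzong14}) equals $2^{2n-1}F'^2+D$, so
\[
1-\cos\theta=-\frac{2^{2n-1}F'^2}{D},\qquad 1+\cos\theta=\frac{2D+2^{2n-1}F'^2}{D}.
\]
From $a\cos\phi=2^{n-1}F'-E'$ one gets
\[
a^2\sin^2\phi=a^2-(2^{n-1}F'-E')^2=-D-2^{2n-2}F'^2 .
\]
Hypothesis (\ref{gzong1}) is exactly $|\cos\phi|\le 1$. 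It also follows that $D\le -2^{2n-2}F'^2\le 0$, which gives $\cos\theta\le 1$. I would check $\cos\theta\ge -1$ from the same expression, so that $\phi$ and $\theta$ exist.

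\textbf{Step 1: the first equation.} I expand $\cos(\phi+\theta/2)=\cos\phi\cos\frac{\theta}{2}-\sin\phi\sin\frac{\theta}{2}$ and substitute $a\cos\phi=2^{n-1}F'-E'$. The $E'$ terms cancel, leaving
\[
2^{n-1}F'\cos\tfrac{\theta}{2}=a\sin\phi\sin\tfrac{\theta}{2}.
\]
Multiplying by $2\sin\frac{\theta}{2}$ turns this into $2^{n-1}F'\sin\theta=a\sin\phi\,(1-\cos\theta)$.

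\textbf{Step 2: the second equation.} I expand $\sin(\phi+\theta/2)$, multiply through by $2a\sin\frac{\theta}{2}$, and again use the value of $a\cos\phi$. The $E'$ terms cancel and the equation becomes
\[
a\sin\phi\sin\theta=2^{n-1}F'(1+\cos\theta).
\]
When $\sin\theta\neq 0$, this is equivalent to the relation from Step 1, because $\sin^2\theta=(1-\cos\theta)(1+\cos\theta)$. So it suffices to prove this one relation.

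\textbf{Step 3: verify the relation.} I first check its squared form,
\[
a^2\sin^2\phi\,(1-\cos\theta)=2^{2n-2}F'^2(1+\cos\theta).
\]
Substituting the expressions above, both sides equal $2^{2n-1}F'^2\,(D+2^{2n-2}F'^2)/D$, which is a routine algebraic check. This yields $a\sin\phi\sin\theta=\pm2^{n-1}F'(1+\cos\theta)$.

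\textbf{Step 4: fixing the sign (main obstacle).} The condition $a\sin\phi\sin\theta>0$ forces the left side to be positive. To conclude that the sign is $+$, I also need $F'(1+\cos\theta)\ge 0$. First, $1+\cos\theta\ge 0$ always holds. Second, I will argue $F'>0$. With $j_1'\approx\frac{\pi}{4}\sqrt{2^n}-\alpha_q\sqrt{2^{n-q}}$ and $\alpha_q>0$, the angle $(2j_1'+1)\lambda$ stays below $\pi/2$, so $\cos((2j_1'+1)\lambda)>0$ and hence $F'>0$. This selects the $+$ branch and proves both equations of (\ref{sgzong14}). The degenerate case $\sin\frac{\theta}{2}=0$ is excluded by the strict sign condition.
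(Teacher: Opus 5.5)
Your proof is correct and follows essentially the same route as the paper. After substituting $a_{t'}\cos\phi+E'=2^{n-1}F'$, both equations collapse to the single identity $a_{t'}\sin\phi\sin\theta=2^{n-1}F'(1+\cos\theta)$; the paper's target $2^{n}F'+2^{3n-2}F'^3/(E'^2-2^nE'F'-a_{t'}^2)$ is exactly this quantity. That identity is checked up to sign (the paper via explicit $\pm$ formulas for $\sin\theta$ and $\sin\phi$, you via the squared identity). The sign is then fixed by $a_{t'}\sin\phi\sin\theta>0$ together with $F'>0$.

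You go a little further than the paper by checking $|\cos\theta|\le 1$ and by motivating $F'>0$, which the paper simply asserts. Your argument for $F'>0$ is heuristic, though. To make it rigorous you would need to show that the excess of roughly $2\lambda$, coming from the rounding in $j_1'$ and the extra $\lambda$, is smaller than $2\alpha_q/\sqrt{2^q}$.
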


The proof of Lemma \ref{l3} is available in  \ref{x3}. In the following, we would elucidate how to precisely obtain the first $q$ bits of the target through {a} theorem. 

\begin{theorem}\label{gt3}
 If  $\phi$ and $\theta$    satisfy equation \begin{equation*} \left\{
\begin{aligned}
\cos \left(\phi+\frac{\theta }{2}\right) & =  \frac{-E'\cos\frac{\theta }{2}} {a_{t^{\prime}}}, \\
\sin\left(\phi+\frac{\theta}{2}\right)& =\frac{\left(\cos\theta-1 \right)E'+2^{n}F'}{2a_{t^{'}}\sin\frac{\theta}{2}},
\end{aligned}
\right.
\end{equation*}  then we can obtain the first $q$ bits $t_{0}\cdots t_{q-1}$ that satisfy Boolean function $$f(t_{0}\cdots t_{q-1}\cdots t_{n-1})=1$$ with probability 1.
\end{theorem}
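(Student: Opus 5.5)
The plan is to compute explicitly the state $|\psi_3'\rangle=G_g|\psi_2\rangle$ and show that, under the two stated equations, every computational basis state outside the target block $t^*$ has amplitude exactly zero. Measuring the first $q$ qubits then returns $t^*=t_0\cdots t_{q-1}$ with certainty, since $G_g$ is unitary and the state stays normalized. Throughout I assume, as the statement implicitly does, that $a_{t'}\neq 0$ and $\sin\frac{\theta}{2}\neq 0$.

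\textbf{Step 1: coordinates of $|\psi_2\rangle$.} Starting from the expression for ${G_1}^{j_2'}G^{j_1'}|\varphi_n\rangle$ given above, each basis state $|x''y\rangle$ with $x''\neq t^*$ has amplitude
\[
\frac{\sqrt{2^{n-q}}\cos((2j_1'+1)\lambda)}{\sqrt{2^n-1}}\cdot\frac{1}{\sqrt{2^{n-q}}}=F'.
\]
Inside the target block, $|t^*t'\rangle$ has amplitude $a_{t'}$. Each other state $|t^*x\rangle$, $x\neq t'$, has amplitude $a_{n_{t'}}/\sqrt{2^{n-q}-1}$.

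\textbf{Step 2: apply $U_f^{\phi}$.} This only multiplies the target amplitude by $e^{\imath\phi}$. Next compute the overlap with $|\varphi_n\rangle$:
\[
\langle\varphi_n|U_f^{\phi}|\psi_2\rangle=\frac{1}{\sqrt{2^n}}\Big(a_{t'}e^{\imath\phi}+\sqrt{2^{n-q}-1}\,a_{n_{t'}}+(2^n-2^{n-q})F'\Big)=\frac{a_{t'}e^{\imath\phi}+E'}{\sqrt{2^n}}.
\]
This is exactly why $E'$ was defined as it was.

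\textbf{Step 3: apply the reflection-type operator.} Applying $(1-e^{\imath\theta})|\varphi_n\rangle\langle\varphi_n|-I_n$, the new amplitude of any non-target-block basis state is
\[
\frac{(1-e^{\imath\theta})(a_{t'}e^{\imath\phi}+E')}{2^n}-F'.
\]
The theorem therefore reduces to the single complex equation
\[
(1-e^{\imath\theta})(a_{t'}e^{\imath\phi}+E')=2^nF'.
\]

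\textbf{Step 4: reduce to the two stated equations.} Use the factorization $1-e^{\imath\theta}=-2\imath\sin\frac{\theta}{2}\,e^{\imath\theta/2}$. Dividing by $-2\imath\sin\frac{\theta}{2}$, the complex equation becomes
\[
a_{t'}e^{\imath(\phi+\theta/2)}+E'e^{\imath\theta/2}=\frac{\imath\,2^nF'}{2\sin\frac{\theta}{2}}.
\]
Separating real and imaginary parts gives two conditions.
\begin{itemize}
\item Real part: $a_{t'}\cos(\phi+\frac{\theta}{2})+E'\cos\frac{\theta}{2}=0$, which is the first hypothesis.
\item Imaginary part: $a_{t'}\sin(\phi+\frac{\theta}{2})+E'\sin\frac{\theta}{2}=\frac{2^nF'}{2\sin(\theta/2)}$. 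Using $-2\sin^2\frac{\theta}{2}=\cos\theta-1$, this rearranges to the second hypothesis.
\end{itemize}
Hence both hypotheses together are equivalent to the vanishing of every non-target-block amplitude.

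\textbf{Step 5: conclude.} All of the probability mass of $|\psi_3'\rangle$ lies on states $|t^*y\rangle$, so measuring the first $q$ qubits yields $t^*$ with probability $1$.

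The main obstacle is bookkeeping rather than conceptual. The points to get right are the normalization relating the block state $|\varphi_{n-q}\rangle$ to the per-state amplitude $F'$, the count $2^n-2^{n-q}$ of non-target-block states, and the half-angle manipulation that turns the complex condition into the stated real pair. Lemma \ref{l3} separately guarantees that such $\phi,\theta$ exist, so the theorem only needs the sufficiency direction established above.
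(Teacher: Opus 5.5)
Your proposal is correct and follows the paper's proof. Both compute the common amplitude $\frac{(1-e^{\imath\theta})(a_{t'}e^{\imath\phi}+E')}{2^n}-F'$ of every non-target-block basis state after $G_g$, and check that it vanishes under the two hypotheses. Your factorization $1-e^{\imath\theta}=-2\imath\sin\frac{\theta}{2}\,e^{\imath\theta/2}$ makes explicit the substitution step that the paper only asserts after splitting into real and imaginary parts, and it also shows that the two equations are equivalent to the vanishing condition.
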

The proof of Theorem \ref{gt3} is available in  \ref{x4}.

\section{Iterative exact distributed search algorithm}\label{s4}
Although Ref.\cite{Zhou2023} proposed an exact distributed Grover's algorithm, its query operators depending on OR functions are too complicated  to be implemented  in practice and circuits. To  reduce the circuit depth, we  design a simpler algorithm with more straightforward query operators. In this section, an algorithm for constructing a query operator from Boolean functions to their subfunctions is presented first.
We then give an iterative exact distributed   search algorithm. Subsequently,  we   prove the correctness of the algorithm.  Finally, we juxtapose it with existing precise  search algorithms for a comparative evaluation.
\subsection{ Algorithm design }
Assume an unsorted database has $N=2^{n}$ items, only one of which is the target. We divide the target  string into several segments, and use an iterative approach to find the entire target string. The initial segments are processed  using the exact   partial search algorithm, while the final part is searched using the modified Grover’s algorithm by Long. We describe the method in detail next.

\begin{definition}\label{defi}
Given a Boolean function $f:\{0,1\}^n \rightarrow \{0,1\}$, define the corresponding subfunctions $f_{i}$ as  follows: $f_{i}:\{0,1\}^{n-k} \rightarrow \{0,1\}$, $i\in\{0,1\}^{k}$, $k\in N^+$, where  $f_{i}$ satisfies $f_{i}(x)=f\left( xi\right)$,  $\forall x \in \{0,1\}^{n-k}$. 
\end{definition}
  The subfunction that contains the target is called the target subfunction, denoted as $f_{ \widetilde{i}}$.  The target subfunction's objective $|t\rangle$ is denoted as $|x_{ \widetilde{i}}\rangle\otimes|y_t\rangle (p\le n-k\in N^{+})$, where $x_{ \widetilde{i}}=x_{0}\cdots x_{p-1}\in\{0,1\}^{p}$,  and  $y_t\in\{0,1\}^{n-p-k}$. Initially, we  determine  target substring $x_{ \widetilde{i}}$. 
  Oracle $U_{f_{i}}$($i\in\{0,1\}^{k}$) can query all $f_{i}(x)=f\left(xi\right)$ for all $x \in \{0,1\}^{n-k}$, where $xi$  represents   bit string  $x$ and  bit string $i$  concatenated together. 
  
In quantum query algorithms, there are two common types of oracle: one with auxiliary bits,  represented as $U_{f}|x\rangle|0\rangle\rightarrow|x\rangle|f(x)\rangle$, and the other without auxiliary bits, represented as $U_{f}|x\rangle\rightarrow(-1)^{f(x)}|x\rangle$. When the auxiliary bits of $U_{f}$ with auxiliary bits are in state $|-\rangle$, it is equivalent in effect to $U_{f}$ without auxiliary bits, that is, $U_{f}|x\rangle|-\rangle\rightarrow(-1)^{f(x)}|x\rangle|-\rangle$.  

Next, we present an algorithm for obtaining the output of $U_{f_{i}}^{\alpha}$ from $U_{f}$  with auxiliary bits. {$U_{f_{i}}^{\alpha}$ is defined as $U_{f_{i}}^{\alpha}|x\rangle\rightarrow e^{\imath\alpha \cdot {f_{i}(x)}}|x\rangle$, where  $x\in\{0,1\}^{n-k}$}, $i\in\{0,1\}^{k}$, $\alpha\in \mathbb{R}$. 
\begin{algorithm}
\caption{ The algorithm to obtain  the output of $U_{f_{i}}^{\alpha}$. }
\label{algo22}
\begin{algorithmic}
\STATE \textbf{Input}:  $U_{f}$ and $i$.
\STATE \textbf{Output}: {$e^{\imath\alpha \cdot {f_{i}(x)}}|x\rangle$}.
\STATE \textbf{Procedure:}
\STATE Step 1:  Prepare the initial state {$|x\rangle|i\rangle|0\rangle$. }
\STATE Step 2:  Execute the  operator $U_{f}$ to obtain  { $ {|x\rangle|i\rangle|f(x i)\rangle}$}. 
\STATE Step 3: Execute the operator $I_{n}\otimes  R$ to obtain {$e^{\imath\alpha \cdot f(x i)}{|x\rangle|i\rangle|f(x i)\rangle}$}. 
\STATE Step 4: Execute the  operator $U_{f}$ to obtain {${e^{\imath\alpha \cdot f(x i)}|x\rangle|i\rangle|0\rangle}$}.
\STATE Step 5: Apply the $X$ gate at the position containing 1 in $|i\rangle$ to obtain {${e^{\imath\alpha \cdot f(x i)}|x\rangle|0\rangle|0\rangle}$.}
\end{algorithmic}
\end{algorithm}

  In Algorithm \ref{algo22}, \begin{equation*}R=
\begin{pmatrix} 1 & 0 \\ 0 & e^{\imath\alpha} \end{pmatrix}.\end{equation*}   According to the definition of the subfunction, $f_{i}(x)=f\left( xi\right)$, Algorithm \ref{algo22} concludes that {$$e^{\imath\alpha \cdot f_{i}(x)}{|x\rangle}.$$}
  
In the following distributed algorithm, we assume there exists such an $U_{f_{i}}^{\alpha}$ without auxiliary bits.  First, we will define some operators.

The definition of generalized global Grover search operator $G_4$ is as follows:
\begin{equation}\label{generalizG}
G_4=\left((1-e^{\imath\theta })|\varphi_{{n-k}}\rangle\langle\varphi_{{n-k}}|-I_{{n-k}}\right)U_{f_{i}}^{\phi}, \quad \phi, \theta\in \mathbb{R},
\end{equation}
where $|\varphi_{{n-k}}\rangle=H^{\otimes n-k}|0^{n-k}\rangle$ and $\forall i\in\{0,1\}^{k}$, \begin{align*}
U_{f_{i}}^{\phi}|x\rangle &=
\begin{cases}
e^{\imath\phi \cdot f_i(x)}|x\rangle, & f_i(x)=1,\\
|x\rangle, & f_i(x)=0.
\end{cases} 
\end{align*}
 Specifically, if  $\theta=\phi=\pi$, then \begin{equation}\label{generalizG2}G_2=\left(2|\varphi_{{n-k}}\rangle\langle\varphi_{{n-k}}|-I_{{n-k}}\right)U_{f_{i}}.\end{equation}

We set the iteration counts for steps 3 and 4  as follows: 
 \begin{equation}\label{cqu1}
p_1=\lfloor\frac \pi 4\sqrt{ 2^{n-k}}- \gamma_p\sqrt {2^{n-p-k}}\rceil, \quad p_2=\lfloor\eta_p\sqrt{2^{n-p-k}}\rceil,  \end{equation}  
 where \begin{equation}\label{m2}
\gamma_p=\frac{\sqrt {2^{p}}}{2}\arctan\left(\frac{\sqrt{3\cdot2^{p}-4}}{2^{p}-2}\right),\quad\quad \eta_p=\arcsin\left(\frac{\sqrt {2^{p}}}{\sqrt{4(2^{p}-1)}}\right).
\end{equation}  

 The local Grover search operator $G_3$ is  constructed as follows \begin{equation} \label{def G3}
G_3=\left(I_{{p}}\otimes U_{{{n-p-k}}}\right)U_{f_i},
\end{equation}
 where  $U_{{{n-p-k}}}=2|\varphi_{{n-p-k}}\rangle\langle\varphi_{{n-p-k}}|-I_{{n-p-k}}$, $|\varphi_{{n-p-k}}\rangle=H^{\otimes n-p-k}|0^{n-p-k}\rangle$ is a uniform superposition state.

The operator $L_i$ is defined as  $$L_i=\left((1-e^{\imath\omega
})|\varphi_{{n-p-k}}\rangle\langle\varphi_{{n-p-k}}|-I_{{n-p-k}}\right)U_{f_{i,x_i}}^{\omega}$$ for any $ i\in\{0,1\}^{k}$, where 
$$\omega=2\arcsin\left(\sin\left(\frac{\pi}{4\lfloor(\pi/2-\theta')/(2\theta')\rfloor+6}\right) / \sin  \theta'\right),$$
$$|\varphi_{{n-p-k}}\rangle=H^{\otimes n-p-k}|0^{n-p-k}\rangle,$$
\begin{align*}
U_{f_{i,x_i}}^{\omega}|y\rangle &=
\begin{cases}
e^{\imath\omega \cdot f_{i,x_i}(y)}|y\rangle, & f_{i,x_i}(y)=1,\\
|y\rangle, & f_{i,x_i}(y)=0.
\end{cases} 
\end{align*} 
Here, $\theta'=\arcsin{\frac {1} {\sqrt{2^{n-p-k}}}}$ and ${f_{i,x_i}}(y)= {f_{i}}(x_iy), \forall y\in\{0,1\}^{n-p-k}.$ $x_i$ is a specific $p$  bit string obtained through steps 1 to 6.

\begin{algorithm}[htbp]
\caption{ Iterative exact distributed   search algorithm }
\label{algo21}
\begin{algorithmic}
\STATE \textbf{Input}:  Boolean function $f:\{0,1\}^n \rightarrow\{0,1\}$ with $|\{x\in\{0,1\}^n|f(x)=1\}|= 1$.
\STATE \textbf{Output}: The string $x= x_{ \widetilde{i}}y_t\widetilde{i}$ satisfies $f(x)=1$ with a probability of 1.
\STATE \textbf{Procedure:}
\STATE Step 1:  For each $f_i (i\in\{0,1\}^{k})$, $2^k$ machines will execute the following steps in parallel. 
\STATE Step 2:  Initialize  register  to the state $|\varphi_{{n-k}}\rangle=H^{\otimes n-k}|0^{n-k}\rangle$.
\STATE Step 3:  Execute the global Grover operator $G_2$ on  $|\varphi_{{n-k}}\rangle$ for $p_1$ times  to obtain $|\psi_{i_{n-k}}\rangle$. 
\STATE Step 4: Execute the local Grover operator $G_3$ on  $|\psi_{i_{n-k}}\rangle$ for $p_2$ times to obtain $|\psi_{i_{n-k}}^{\prime}\rangle$.
\STATE Step 5: Apply the generalized global Grover search operator $G_4$ on  $|\psi_{i_{n-k}}^{\prime}\rangle$ for 1 time to obtain $|\psi_{i_{n-k}}^{\prime\prime}\rangle$. 
\STATE Step 6: Measure the first $p(p\le n-k\in N^{+})$ qubits yields the string $x_i\in\{0,1\}^{p}$, $ i\in\{0,1\}^{k}.$
\STATE Step 7: Based on the function $f_i$ and bit string $x_i$,   construct a new function $f_{i,x_i}$   such that   $${f_{i,x_i}}(y)= {f_{i}}(x_iy), \forall y\in\{0,1\}^{n-p-k}, i\in\{0,1\}^{k}.$$ 
\STATE Step 8: For each $f_{i,x_i}$, $2^k$ machines will reexecute the following steps in parallel.
\STATE Step 9:  Initialize  register  to the state $|\varphi_{{n-p-k}}\rangle=H^{\otimes n-p-k}|0^{n-p-k}\rangle$. 
 \STATE Step 10: Apply  operator $L_i$ on  $|\varphi_{{n-p-k}}\rangle$ for $\lfloor(\pi/2-\theta')/(2\theta')\rfloor+1$ times,  then measure to produce $y_i\in\{0,1\}^{n-p-k}$.
\STATE Step 11: Verify whether $y_i$ is a solution of $f_{i,x_i}$ through a classical query of $y_i$, the target $ x_{ \widetilde{i}}y_t\widetilde{i}$ can be determined.  
\end{algorithmic}
\end{algorithm}

 Algorithm  \ref{algo21} is denoted as IDGS algorithm. The IDGS algorithm consists of two stages: the first stage (steps 1-6) addresses the first $p$ bits of the target using the exact quantum partial search algorithm, while the second stage (steps 7-11) employs  the modified Grover’s algorithm  to find the remaining $n-p-k$ bits of the target.  The circuit  for the first and second stages of the IDGS algorithm are shown in Figs. \ref{ IDGS algorithm} and \ref{ IDGS algorithm2}. 

 \begin{figure}[htbp]
		\centering
		\begin{adjustbox}{width=1\textwidth}
\begin{quantikz}
\lstick{$\ket{0^{p}}$} & \gate[2]{H^{\otimes n-k}} \slice[shift=0, height=1.5]{$|\varphi_{n-k}\rangle$}\qw&  \gate[2]{G_{2}}\gategroup[wires=2,steps
=4,style={inner sep=-1.5pt},background,label style={label position=below,anchor=
north,yshift=-0.2cm}]{repeat $p_{1}$ times} & \qw  & \hspace{0.25cm} \ldots \hspace{0.5cm}  & \gate[2]{G_{2}}\slice{$|\psi_{i_{n-k}}\rangle$}   & \gate[2]{G_{3}}\gategroup[wires=2,steps
=4,style={inner sep=-1.5pt},background,label style={label position=below,anchor=
north,yshift=-0.2cm}]{repeat $p_{2}$ times}& \qw& \hspace{0.25cm} \ldots \hspace{0.5cm}& \gate[2]{G_{3}}\slice{$|\psi_{i_{n-k}}^{\prime}\rangle$} & \gate[2]{G_{4}} \slice{$|\psi_{i_{n-k}}^{\prime\prime}\rangle$}& \meter
{} & \rstick{$x_{i}$}\qw  \\
\lstick{$\ket{0^{n-p-k}}$}& \qw  & \qw& \qw& \hspace{0.25cm} \ldots \hspace{0.5cm}& \qw & \qw& \qw& \hspace{0.25cm} \ldots \hspace{0.5cm} & \qw& \qw& \qw&\qw&\qw
\end{quantikz}
	\end{adjustbox}
\caption{The circuit for the first stage of  IDGS algorithm.}
		\label{ IDGS algorithm}
\end{figure}
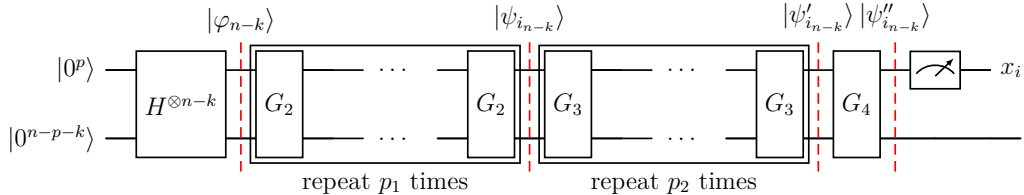
\begin{figure}[htbp]
		\centering
		\begin{adjustbox}{width=0.8\textwidth}
\begin{quantikz}
\lstick{$\ket{0^{n-p-k}}$} & \gate{H^{\otimes n-p-k}} \slice{}&  \gate{L_{i}}\gategroup[wires=1,steps
=3,style={inner sep=-1.5pt},background,label style={label position=below,anchor=
north,yshift=-0.2cm}]{repeat\small  $\lfloor(\pi/2-\theta')/(2\theta')\rfloor+1$ times}  & \hspace{0.25cm} \ldots \hspace{0.5cm}  & \gate{L_{i}} \slice{} & \meter
{} & \rstick{$y_{i}$} 
\end{quantikz}
	\end{adjustbox}
\caption{The circuit for the second stage of  IDGS algorithm.}
		\label{ IDGS algorithm2}
\end{figure}
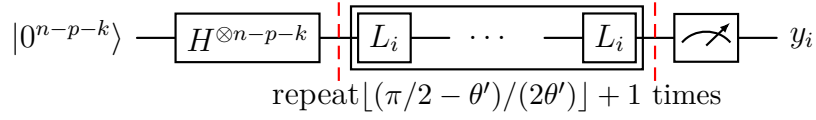
	
\begin {remark} 
When $p=n-k$,  IDGS algorithm is similar to the algorithm for single-objective cases in Ref.\cite{Qiu2022}, but our algorithm shows improvements in accuracy. In addition,if the number of marked elements is $a>1$ and they are uniformly distributed across $T$ subfunctions, then the IDGS algorithm can be used to find the marked elements, resulting in more than 1 marked element being identified.
\end {remark}
\begin {remark} 
If $n-p-k$ is not very large (for example $n-p-k\le4$),  when searching for the final $n-p-k$ bits at the end, we can choose to use a brute-force search instead of the modified Grover’s algorithm by Long. Additionally, Algorithm \ref{algo21} can also be executed continuously by a single computer with $n-k$ qubits, which is equivalent to sacrificing time for space.
\end {remark}
\subsection{ Correctness analysis}
 In this section, we analyze the correctness of IDGS algorithm. First,  let's analyze the target subfunction $f_{\widetilde{i}}$.  The block that includes the target is called the target block, while the one that does not include the target is called the non-target block.

In order to facilitate the subsequent  lemmas, we first introduce some notation. 
We denote   
  \begin{equation}\label{can}
a_{t} :=\sin \left(2 p_1\theta_1+\theta_1 \right)\cos \left(2 p_2\theta'\right)+\frac{\sqrt{2^{n-p-k}-1}\cos \left(2 p_1\theta_1+\theta_1\right) \sin \left(2 p_2\theta'\right)}{\sqrt{2^{n-k}-1}},\end{equation}
  \begin{equation}\label{can1}
a_{n_t} :=-\sin \left(2 p_1\theta_1+\theta_1\right)\sin \left(2 p_2\theta'\right)+\frac{\sqrt{2^{n-p-k}-1}\cos \left(2 p_1\theta_1+\theta_1\right) \cos \left(2 p_2\theta'\right)}{\sqrt{2^{n-k}-1}},\end{equation}
 where $\theta_1=\arcsin\frac{1}{\sqrt{2^{n-k}}}$, $\theta'= \arcsin\frac{1}{\sqrt {2^{n-p-k}}}$.

\begin{lemma}\label{l2}
After step {\rm3} of Algorithm \ref{algo21}, the quantum state {$|\psi_{\widetilde{i}_{n-k}}\rangle$ }
  can be expressed as the following:  \begin{align}\label{G2}
|\psi_{\widetilde{i}_{n-k}}\rangle=&|x_{ \widetilde{i}}\rangle\otimes\left(\sin \left(\left(2 p_1+1\right) \theta_1\right)|y_t\rangle+\sum_{x^{\prime}\neq y_t} \frac{\cos \left(\left(2 p_1+1\right) \theta_1\right)}{\sqrt{2^{n-k}-1}}|x^{\prime}\rangle\right)\\ \nonumber
&+\sum_{ \psi_{{p}}\neq x_{ \widetilde{i}}}|\psi_{{p}}\rangle\otimes \frac{\sqrt{2^{n-p-k}}\cos \left(\left(2 p_1+1\right) \theta_1\right)}{\sqrt{2^{n-k}-1}}|\varphi_{{n-p-k}}\rangle.
\end{align}
\end{lemma}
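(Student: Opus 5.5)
This is the standard Grover rotation analysis applied to the target subfunction $f_{\widetilde{i}}$ on $n-k$ qubits. By Definition \ref{defi} and the single-target assumption, $f_{\widetilde{i}}$ has exactly one marked input, namely $x_{\widetilde{i}}y_t\in\{0,1\}^{n-k}$. We therefore work in the two-dimensional invariant subspace spanned by $|x_{\widetilde{i}}y_t\rangle$ and
\[
|\hat{x}\rangle=\frac{1}{\sqrt{2^{n-k}-1}}\sum_{x\neq x_{\widetilde{i}}y_t}|x\rangle .
\]

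\textbf{Initial state.} Exactly as in Eq. (\ref{in2}), with $n$ replaced by $n-k$, we write
\[
|\varphi_{n-k}\rangle=\cos\theta_1|\hat{x}\rangle+\sin\theta_1|x_{\widetilde{i}}y_t\rangle,\qquad \theta_1=\arcsin\frac{1}{\sqrt{2^{n-k}}} .
\]

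\textbf{Action of $G_2$.} The oracle $U_{f_{\widetilde{i}}}$ reflects about $|\hat{x}\rangle$. The diffusion $2|\varphi_{n-k}\rangle\langle\varphi_{n-k}|-I_{n-k}$ reflects about $|\varphi_{n-k}\rangle$. Both map the plane $\mathrm{span}\{|\hat{x}\rangle,|x_{\widetilde{i}}y_t\rangle\}$ into itself. Their product is a rotation by $2\theta_1$ toward the target, so by induction on $p_1$,
\[
G_2^{p_1}|\varphi_{n-k}\rangle=\cos((2p_1+1)\theta_1)|\hat{x}\rangle+\sin((2p_1+1)\theta_1)|x_{\widetilde{i}}y_t\rangle .
\]
The inductive step is a one-line computation of the reflections on a general vector $\cos\beta|\hat{x}\rangle+\sin\beta|x_{\widetilde{i}}y_t\rangle$.

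\textbf{Regrouping into the form of Lemma \ref{l2}.} Expand $|\hat{x}\rangle$ and split its terms by their first $p$ bits.

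The terms with prefix $x_{\widetilde{i}}$ and suffix $x'\neq y_t$ each carry amplitude $\cos((2p_1+1)\theta_1)/\sqrt{2^{n-k}-1}$. Together with the target term, they form the first line of Eq. (\ref{G2}).

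For each prefix $\psi_p\neq x_{\widetilde{i}}$, all $2^{n-p-k}$ suffixes appear with the same amplitude $\cos((2p_1+1)\theta_1)/\sqrt{2^{n-k}-1}$. Their sum is therefore
\[
\frac{\sqrt{2^{n-p-k}}\cos((2p_1+1)\theta_1)}{\sqrt{2^{n-k}-1}}\,|\psi_p\rangle\otimes|\varphi_{n-p-k}\rangle ,
\]
using $\sum_{y}|y\rangle=\sqrt{2^{n-p-k}}|\varphi_{n-p-k}\rangle$. This gives the second line of Eq. (\ref{G2}).

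The computation is routine. The only points needing care are normalization bookkeeping and noting that the argument uses only the single-target property of $f_{\widetilde{i}}$; it is independent of the specific value of $p_1$ in Eq. (\ref{cqu1}).
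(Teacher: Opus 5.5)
Your proof is correct and takes essentially the same route as the paper. You restrict $G_2$ to the invariant plane spanned by the uniform non-target state and $|x_{\widetilde{i}}y_t\rangle$, where it acts as a rotation by $2\theta_1$, and then regroup the amplitudes by their first $p$ bits; the only difference is that the paper gets $G_2^{p_1}$ by diagonalizing the $2\times 2$ rotation matrix (eigenvalues $e^{\pm 2\imath\theta_1}$), whereas you use induction via the two reflections, which is an equivalent computation.
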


The proof of Lemma \ref{l2} is presented in  \ref{x5}. Next, we will demonstrate how to obtain $|\psi_{{\widetilde{i}}_{n-k}}^{\prime}\rangle$ through a lemma.
\begin {lemma}\label{l8}
After step {\rm4} of  Algorithm \ref{algo21}, the  quantum state is 
\begin{align*}
|\psi_{{\widetilde{i}}_{n-k}}^{\prime}\rangle=&|x_{ \widetilde{i}}\rangle\otimes\left(a_{t}|y_t\rangle+a_{n_t}\sum_{x^{\prime}\neq y_t} \frac{|x^{\prime}\rangle}{\sqrt{2^{n-p-k}-1}}\right)\\
&+\sum_{ \psi_{{p}}\neq x_{ \widetilde{i}}}|\psi_{{p}}\rangle\otimes \frac{\sqrt{2^{n-p-k}}\cos \left(\left(2 p_1+1\right) \theta_1\right)}{\sqrt{2^{n-k}-1}}|\varphi_{{n-p-k}}\rangle.
\end{align*}
\end{lemma}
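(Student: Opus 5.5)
We start from the expression for $|\psi_{\widetilde{i}_{n-k}}\rangle$ given by Lemma \ref{l2} and apply $G_3=(I_p\otimes U_{n-p-k})U_{f_{\widetilde{i}}}$ exactly $p_2$ times. The key structural fact is that $G_3$ acts blockwise. Since $U_{f_{\widetilde{i}}}$ marks only $x_{\widetilde{i}}y_t$, it acts trivially on every non-target block $|\psi_p\rangle\otimes|\cdot\rangle$ with $\psi_p\neq x_{\widetilde{i}}$. On such a block the state is $|\psi_p\rangle\otimes c\,|\varphi_{n-p-k}\rangle$ with $c=\sqrt{2^{n-p-k}}\cos((2p_1+1)\theta_1)/\sqrt{2^{n-k}-1}$, and $U_{n-p-k}|\varphi_{n-p-k}\rangle=|\varphi_{n-p-k}\rangle$. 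Hence each non-target block is a fixed point of $G_3$, and the second line of the formula is unchanged after $p_2$ iterations.

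For the target block, we restrict to the second register. There $G_3$ reduces to the ordinary Grover iterate $U_{n-p-k}U_{f_{\widetilde{i},x_{\widetilde{i}}}}$ on $n-p-k$ qubits with a single marked item $y_t$. We rewrite the target-block component in the orthonormal basis $\{|y_t\rangle,|\hat y\rangle\}$, where $|\hat y\rangle=\frac{1}{\sqrt{2^{n-p-k}-1}}\sum_{x'\neq y_t}|x'\rangle$. Lemma \ref{l2} gives this component as
$$\sin((2p_1+1)\theta_1)|y_t\rangle+\frac{\sqrt{2^{n-p-k}-1}\cos((2p_1+1)\theta_1)}{\sqrt{2^{n-k}-1}}|\hat y\rangle .$$
This two-dimensional subspace is invariant under the Grover iterate. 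With $\sin\theta'=1/\sqrt{2^{n-p-k}}$, the standard computation shows the iterate acts on the coordinates $(\text{coefficient of }|\hat y\rangle,\ \text{coefficient of }|y_t\rangle)$ as a rotation by $2\theta'$, namely
$$\begin{pmatrix}\cos 2\theta' & -\sin 2\theta'\\ \sin 2\theta' & \cos 2\theta'\end{pmatrix}.$$
We verify this by applying the oracle (flipping the sign of $|y_t\rangle$) followed by the reflection about $|\varphi_{n-p-k}\rangle=\cos\theta'|\hat y\rangle+\sin\theta'|y_t\rangle$. Applying it $p_2$ times gives a rotation by $2p_2\theta'$. With starting coordinates $u=\sqrt{2^{n-p-k}-1}\cos((2p_1+1)\theta_1)/\sqrt{2^{n-k}-1}$ and $v=\sin((2p_1+1)\theta_1)$, we obtain:
\begin{itemize}
\item the new $|y_t\rangle$ coefficient $v\cos(2p_2\theta')+u\sin(2p_2\theta')=a_t$;
\item the new $|\hat y\rangle$ coefficient $u\cos(2p_2\theta')-v\sin(2p_2\theta')=a_{n_t}$.
\end{itemize}
Both match the definitions in Eqs. (\ref{can}) and (\ref{can1}). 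Expanding $|\hat y\rangle$ back into the computational basis gives exactly the first line of the claimed state.

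The main point needing care is the sign convention of the rotation, i.e. the orientation of the basis and whether $G_3$ carries an overall $-1$. Since $G_3$ uses $2|\varphi\rangle\langle\varphi|-I$ and $U_f$ has a $(-1)^{f}$ phase, there is no extra global sign. I would check the orientation once with $p_2=1$ directly, and then conclude the general case by induction on $p_2$.
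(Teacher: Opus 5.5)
Your proposal is correct and follows essentially the same route as the paper. Both arguments show that the non-target blocks are fixed by $G_3$, and that on the target block $G_3$ acts as a rotation by $2\theta'$ in the basis $\{|x_{\widetilde{i}}\rangle\otimes|\hat y\rangle,\ |x_{\widetilde{i}}y_t\rangle\}$. The only difference is cosmetic: you compose the rotations directly to get a rotation by $2p_2\theta'$, whereas the paper diagonalizes $G_3$ via eigenvectors $|\psi_0^{\pm}\rangle$ with eigenvalues $e^{\pm 2\imath\theta'}$ and recombines, reaching the same $a_t$ and $a_{n_t}$.
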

The proof of Lemma \ref{l8} is presented in  \ref{x6}.

We denote 
$$E :=\sqrt{2^{n-p-k}-1}\cdot {a_{n_{t}}}+\frac{\left(2^{n-k}-2^{n-p-k}\right)\cos \left(\left(2 p_1+1\right) \theta_1\right)}{\sqrt{2^{n-k}-1}},$$
$$F :=\frac{\cos \left(\left(2 p_1+1\right) \theta_1\right)}{\sqrt{2^{n-k}-1}}.$$
With these notations, we  give the following lemma.
\begin {lemma}\label{720}
If $E$,  $F$ and ${a_{t}}$ satisfy the  inequality 
\begin{equation}\label{zong1}
\left(E-{2^{n-k-1}F}\right)^2\le{a_{t}}^2,
\end{equation}
let \begin{equation}\label{xzong14} \left\{
\begin{aligned}
&\cos\theta=\frac{2^{2n-2k-1}F^2-2^{n-k}EF+E^2-a_{t}^2}{E^2-2^{n-k}EF-a_{t}^2}, \\
&\cos\phi=\frac{2^{n-k-1}F-E}{a_{t}},\\
&a_{t}\sin\phi\sin\theta>0, 
\end{aligned}
\right.
\end{equation}
  then  $\phi$ and $\theta$   necessarily satisfy equation \begin{equation}\label{xgzong14} \left\{
\begin{aligned}
\cos \left(\phi+\frac{\theta }{2}\right) & =  \frac{-E\cos\frac{\theta }{2}} {a_{t}}, \\
\sin\left(\phi+\frac{\theta}{2}\right)& =\frac{\left(\cos\theta-1 \right)E+2^{n-k}F}{2a_{t}\sin\frac{\theta}{2}}.
\end{aligned}
\right.
\end{equation}
\end{lemma}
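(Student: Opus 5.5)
This lemma has exactly the same form as Lemma \ref{l3}, with $n$ replaced by $n-k$ and $(E',F',a_{t'})$ replaced by $(E,F,a_t)$. The shortest route is therefore to observe that the proof of Lemma \ref{l3} uses only three things: the algebraic identities relating $E'$, $F'$ and $a_{t'}$; the hypothesis (\ref{gzong1}); and the dimension $2^n$ through the constants $2^{n-1}$ and $2^n$. None of these depends on the specific origin of the quantities. I would nevertheless write out the direct verification, since it is elementary trigonometry. Throughout, set $D:=2^{n-k}$ and $c:=\cos\phi=(D F/2-E)/a_t$. The hypothesis (\ref{zong1}) says exactly that $|c|\le 1$, so $\phi$ is well defined.

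The first step is to simplify $\cos\theta$. Write $A:=E^2-DEF-a_t^2$. Then the numerator equals $A+D^2F^2/2$, so
\[
\cos\theta-1=\frac{D^2F^2}{2A}, \qquad 1+\cos\theta=\frac{2A+D^2F^2/2}{A}.
\]
Using $a_t c = DF/2-E$, one finds $2A+D^2F^2/2=2(E-DF/2)^2-2a_t^2=2a_t^2(c^2-1)$. Hence
\[
\cos^2\tfrac{\theta}{2}=\frac{a_t^2(c^2-1)}{A}, \qquad \sin^2\tfrac{\theta}{2}=-\frac{D^2F^2}{4A}.
\]
This gives $\tan^2\frac{\theta}{2}=D^2F^2/\bigl(4a_t^2\sin^2\phi\bigr)$. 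I also need $A<0$ for these to be valid squares. This should follow from $A=a_t^2(c^2-1)-D^2F^2/4\le 0$, with the degenerate equality case handled separately or excluded.

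The second step is to verify the two target equations. Expanding $\cos(\phi+\theta/2)=\cos\phi\cos\frac{\theta}{2}-\sin\phi\sin\frac{\theta}{2}$, the first equation becomes
\[
(a_t c+E)\cos\tfrac{\theta}{2}=a_t\sin\phi\sin\tfrac{\theta}{2},
\]
that is,
\[
\tfrac{DF}{2}\cos\tfrac{\theta}{2}=a_t\sin\phi\sin\tfrac{\theta}{2}.
\]
For the second equation, multiply through by $2a_t\sin\frac{\theta}{2}$ and use $\cos\theta-1=-2\sin^2\frac{\theta}{2}$ and $\sin\theta=2\sin\frac{\theta}{2}\cos\frac{\theta}{2}$. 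After substituting $a_t c=DF/2-E$ and cancelling, the $E$ terms drop out and the equation reduces to the same relation $\frac{DF}{2}\cos\frac{\theta}{2}=a_t\sin\phi\sin\frac{\theta}{2}$. So both equations are equivalent to this single identity.

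The final step is to establish that identity. The squared version follows from the expression for $\tan^2\frac{\theta}{2}$ in the first step. The remaining work, and the main obstacle, is the sign. I need to show that the sign condition $a_t\sin\phi\sin\theta>0$ forces $a_t\sin\phi\sin\frac{\theta}{2}$ to have the same sign as $\frac{DF}{2}\cos\frac{\theta}{2}$. Since $\sin\theta=2\sin\frac{\theta}{2}\cos\frac{\theta}{2}$, the condition gives $a_t\sin\phi\sin\frac{\theta}{2}\cos\frac{\theta}{2}>0$, so I need $F\cos^2\frac{\theta}{2}\cdot\frac{D}{2}>0$, i.e.\ $F>0$. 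I would first check the sign of $F=\cos((2p_1+1)\theta_1)/\sqrt{D-1}$. Because $p_1<\frac{\pi}{4}\sqrt{D}$, we have $(2p_1+1)\theta_1<\pi/2$, which gives $F>0$. If that fails for rounding reasons, I would instead absorb the sign by choosing the branch of $\theta$, since only $\cos\theta$ is fixed. After this, the lemma follows verbatim as in Lemma \ref{l3}.
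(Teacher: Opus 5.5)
Your argument follows the paper's proof (the appendix proof of Lemma~\ref{l3}, which the paper reuses for this lemma), written in half-angle form. The paper multiplies the two target equations by $\cos\frac{\theta}{2}$ and by $2a_t\sin\frac{\theta}{2}$, and reduces both to $a_t\sin\phi\sin\theta=DF+\frac{D^3F^3}{4A}$. That is exactly your identity $\frac{DF}{2}\cos\frac{\theta}{2}=a_t\sin\phi\sin\frac{\theta}{2}$ multiplied by $2\cos\frac{\theta}{2}$. The paper then fixes the sign as you do, from the sign hypothesis together with $F>0$. Your algebra is correct: $\cos\theta\pm1$, $A=a_t^2(c^2-1)-\frac{D^2F^2}{4}$, and the squared identity all check out. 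Once $F>0$, $A<0$ is automatic. Equality in $(E-2^{n-k-1}F)^2\le a_t^2$ is excluded by the sign condition, since it forces $\sin\phi=0$. The one step that fails is your justification of $F>0$.

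\textbf{Why $F>0$ does not follow as you claim.} The inference ``$p_1<\frac{\pi}{4}\sqrt{D}$, hence $(2p_1+1)\theta_1<\frac{\pi}{2}$'' is false. Both the $+1$ and the fact that $\theta_1=\arcsin\frac{1}{\sqrt{D}}>\frac{1}{\sqrt{D}}$ push the angle up. For $D=16$, $p_1=3$ gives $7\arcsin\frac14\approx1.77>\frac{\pi}{2}$.

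\textbf{What actually gives $F>0$.} It is the offset $g:=\gamma_p\sqrt{2^{n-p-k}}$ in the definition of $p_1$. Since $p_1\le\frac{\pi}{4}\sqrt{D}-g+\frac12$ and $\theta_1\le\frac{1}{\sqrt{D-1}}$, it suffices that $g>1+\frac{\pi}{4}\bigl(\sqrt{D}-\sqrt{D-1}\bigr)$. This holds whenever $n-p-k\ge1$. In that case $g>\frac{\sqrt3}{2}\sqrt{2^{n-p-k}}\ge\sqrt{3/2}\approx1.225$, while the right-hand side is at most $1+\frac{\pi}{4}(2-\sqrt3)<1.22$.

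\textbf{Why the fallback fails.} Choosing the branch of $\theta$ is not available. The condition $a_t\sin\phi\sin\theta>0$ is a hypothesis, and both target equations are invariant under $\theta\mapsto\theta+2\pi$, so no freedom remains. In fact, if $F<0$, your own computation shows that every admissible pair satisfies $a_t\sin\phi\sin\frac{\theta}{2}=-\frac{DF}{2}\cos\frac{\theta}{2}\neq\frac{DF}{2}\cos\frac{\theta}{2}$. The lemma would then be false, and the sign condition would have to be reversed.

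The paper itself only asserts $F>0$ (``Since $F'>0$'' in the proof of Lemma~\ref{l3}). You should either prove it along the lines above or state it as an explicit assumption.
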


The proof of the lemma is analogous to that of Lemma \ref{l3}, and will not be provided here.  Below, a lemma is presented to illustrate how the first  $p$ bits of the target can be precisely obtained. 

\begin{lemma}\label{t1}
 If  Eq. (\ref{xgzong14})  holds, 
 then we can obtain the first $p$ bits $x_{0}\cdots x_{p-1}$that satisfy  Boolean function $$f_{\widetilde{i}}(x_{0}\cdots x_{p-1}y_t)=1$$ with probability 1.
\end{lemma}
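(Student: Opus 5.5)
We compute the state $|\psi''_{\widetilde{i}_{n-k}}\rangle=G_4|\psi'_{\widetilde{i}_{n-k}}\rangle$ explicitly, starting from the form given by Lemma \ref{l8}. We then show that under Eq. (\ref{xgzong14}) the amplitude on every non-target block vanishes. Measuring the first $p$ qubits then returns $x_{\widetilde{i}}$ with certainty. This is the same argument as for Theorem \ref{gt3}, with $n$ replaced by $n-k$ and $q$ by $p$, so we follow that structure.

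\textbf{Step 1: the oracle.} For the target machine $i=\widetilde{i}$, apply $U^{\phi}_{f_{\widetilde{i}}}$ to the state of Lemma \ref{l8}. This only multiplies the coefficient of $|x_{\widetilde{i}}y_t\rangle$ by $e^{\imath\phi}$, giving $a_te^{\imath\phi}$. All other amplitudes are unchanged.

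\textbf{Step 2: the diffusion.} Compute the overlap $\langle\varphi_{n-k}|\cdot\rangle$ of the resulting state. Write $S=\sqrt{2^{n-k}}\,\langle\varphi_{n-k}|\cdot\rangle$. The target block contributes $a_te^{\imath\phi}+\sqrt{2^{n-p-k}-1}\,a_{n_t}$. The $2^p-1$ non-target blocks each contribute $2^{n-p-k}\cos((2p_1+1)\theta_1)/\sqrt{2^{n-k}-1}$, so together they give $(2^{n-k}-2^{n-p-k})F$. Hence
\[
S=a_te^{\imath\phi}+E .
\]
Applying $(1-e^{\imath\theta})|\varphi_{n-k}\rangle\langle\varphi_{n-k}|-I_{n-k}$ maps each amplitude $c_x$ to $(1-e^{\imath\theta})S/2^{n-k}-c_x$.

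\textbf{Step 3: the non-target condition.} Every basis state $|\psi_p\rangle|y\rangle$ with $\psi_p\neq x_{\widetilde{i}}$ has $c_x=\sqrt{2^{n-p-k}}F$. It therefore ends with amplitude $(1-e^{\imath\theta})S/2^{n-k}-\sqrt{2^{n-p-k}}F$. Dividing by $\sqrt{2^{n-p-k}}$ (the uniform factor in $|\varphi_{n-p-k}\rangle$), the condition for this amplitude to vanish is the complex equation
\[
(1-e^{\imath\theta})(a_te^{\imath\phi}+E)=2^{n-k}F,
\]
up to the normalisation bookkeeping, which must be checked carefully. We write $1-e^{\imath\theta}=-2\imath\sin\frac{\theta}{2}e^{\imath\theta/2}$. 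Then we separate real and imaginary parts, using the facts that $a_t$, $E$ and $F$ are real. This yields two real equations. After dividing by $2a_t\sin\frac{\theta}{2}$, they should be exactly the two lines of Eq. (\ref{xgzong14}). In particular, the relation $(\cos\theta-1)E=-2\sin^2\frac{\theta}{2}E$ accounts for the numerator in the second line.

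We verify this matching by expanding $e^{\imath(\phi+\theta/2)}$. This gives $a_t\cos(\phi+\frac{\theta}{2})+E\cos\frac{\theta}{2}=0$ and $2a_t\sin\frac{\theta}{2}\sin(\phi+\frac{\theta}{2})+2E\sin^2\frac{\theta}{2}=2^{n-k}F$. This algebraic identification, with the correct sign and normalisation conventions, is the main obstacle, and we would check it against the $n\to n-k$ specialisation of the proof of Theorem \ref{gt3}.

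\textbf{Step 4: conclusion.} With all non-target block amplitudes equal to zero, and since $G_4$ is unitary, the whole norm lies in the block $|x_{\widetilde{i}}\rangle\otimes(\cdot)$. Measuring the first $p$ qubits therefore yields $x_0\cdots x_{p-1}=x_{\widetilde{i}}$ with probability 1. This string satisfies $f_{\widetilde{i}}(x_0\cdots x_{p-1}y_t)=1$. Lemma \ref{720} guarantees that suitable $\phi,\theta$ exist whenever inequality (\ref{zong1}) holds.
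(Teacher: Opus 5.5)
Your argument is the paper's own: compute $G_4|\psi'_{\widetilde{i}_{n-k}}\rangle$ from Lemma \ref{l8}, reduce each basis-state amplitude in a non-target block to $\frac{(1-e^{\imath\theta})(a_te^{\imath\phi}+E)}{2^{n-k}}-F$, and check that Eq. (\ref{xgzong14}) makes it vanish; your factorisation $1-e^{\imath\theta}=-2\imath\sin\frac{\theta}{2}e^{\imath\theta/2}$ does this check more cleanly than the paper, which splits into real and imaginary parts in $\theta,\phi$ and then substitutes. One slip in Step 3: each basis state of a non-target block has amplitude $c_x=F$, not $\sqrt{2^{n-p-k}}F$ (the prefactor $\sqrt{2^{n-p-k}}$ cancels the normalisation of $|\varphi_{n-p-k}\rangle$, which your block count $2^{n-p-k}F$ in Step 2 already assumes), so the amplitude is directly $(1-e^{\imath\theta})S/2^{n-k}-F$ and no rescaling is needed---indeed dividing by $\sqrt{2^{n-p-k}}$ could not repair the mixed expression, since rescaling does not change where it vanishes.
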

\begin{proof}
By step 5 of the algorithm, we have
\begin{equation}\label{zong12}
 \begin{aligned}
 G_4|\psi_{{\widetilde{i}}_{n-k}}^{\prime}\rangle=&\frac{{a_{t}}(1-e^{\imath\theta })e^{\imath\phi }}{\sqrt{2^{n-k}}}|\varphi_{{n-k}}\rangle-e^{\imath\phi }{a_{t}}|x_{ \widetilde{i}}y_t\rangle\\
&+\frac{{a_{n_{t}}}(2^{n-p-k}-1)(1-e^{\imath\theta })}{\sqrt{2^{n-p-k}-1}\sqrt{2^{n-k}}}|\varphi_{{n-k}}\rangle-|x_{ \widetilde{i}}\rangle\otimes\sum\limits_{x^{\prime}\neq y_t} \frac{a_{n_{t}}|x^{\prime}\rangle}{\sqrt{2^{n-p-k}-1}}\\
&+\cos \left(\left(2 p_1+1\right) \theta_1\right)\left[\frac{ (2^{n-k}-2^{n-p-k})(1-e^{\imath\theta })|\varphi_{{n-k}}\rangle}{\sqrt{2^{n-k}(2^{n-k}-1)}}\right.\\
&\left.-\sum_{ \psi_{{p}}\neq x_{ \widetilde{i}}}|\psi_{{p}}\rangle\otimes \frac{\sqrt{2^{n-p-k}}}{\sqrt{2^{n-k}-1}}|\varphi_{{n-p-k}}\rangle\right].
\end{aligned} 
\end{equation}
Since $$|\varphi_{{n-k}}\rangle=\frac{ |x_{ \widetilde{i}}y_t\rangle+|x_{ \widetilde{i}}\rangle\otimes\sum\limits_{x^{\prime}\neq y_t} |x^{\prime}\rangle+\sum_{ \psi_{{p}}\neq x_{ \widetilde{i}}}|\psi_{{p}}\rangle\otimes \sqrt{2^{n-p-k}}|\varphi_{{n-p-k}}\rangle}{\sqrt{2^{n-k}}},$$after the action of $G_4$, the amplitude of elements in each non-target block is 
\begin{align}\label{zong131}&\frac{{a_{t}}(1-e^{\imath\theta })e^{\imath\phi }}{2^{n-k}}+\frac{{a_{n_{t}}}(2^{n-p-k}-1)(1-e^{\imath\theta })}{2^{n-k}\sqrt{2^{n-p-k}-1}}\\ \nonumber
+&\frac{\cos \left(\left(2 p_1+1\right) \theta_1\right)(2^{n-k}-2^{n-p-k})(1-e^{\imath\theta })}{2^{n-k}\sqrt{2^{n-k}-1}}-\frac{\cos \left(\left(2 p_1+1\right) \theta_1\right)}{\sqrt{2^{n-k}-1}}.\end{align}
By combining $E$ and $F$,  a calculation  of Eq. (\ref{zong131}) yields the following equation
\begin{align}\label{zong151}
&\frac{(1-e^{\imath\theta })({a_{t}}e^{\imath\phi }+E)}{2^{n-k}}-F \nonumber \\
=&\frac{(1-\cos \theta-\imath\sin\theta  )[{a_{t}}(\cos \phi+\imath\sin\phi )+E]}{2^{n-k}}-F \nonumber\\
=&\frac{(1-\cos \theta)({a_{t}}\cos \phi+E)+{a_{t}}\sin\theta\sin\phi }{2^{n-k}}-F\nonumber\\
&+\imath\frac{[{a_{t}}(1-\cos \theta)\sin\phi -{a_{t}}\sin\theta \cos \phi-E\sin\theta ]}{2^{n-k}}.
\end{align}
Finally, substituting Eq. (\ref{xgzong14}) into Eq. (\ref{zong151}) yields  $$\frac{(1-e^{\imath\theta })({a_{t}}e^{\imath\phi }+E)}{2^{n-k}}-F=0,$$i.e.,  the non-target block amplitude is 0. Hence, the target  is finally obtained by measuring the first $p$ qubits. 
\end{proof}

Throughout the process, the number of executions of the  Grover's query operator for non-target subfunctions is the same as that for the target subfunction, and all other parameter values are also the same. For the subfunction that does not contain the target, we have\begin{equation*}
 G_{2}^{p_1}|\varphi_{{n-k}}\rangle=|\varphi_{{n-k}}\rangle,
\end{equation*} 
\begin{equation*}
 G_{3}^{p_2}|\varphi_{{n-k}}\rangle=|\varphi_{{n-k}}\rangle,
\end{equation*} 
\begin{equation*}
 G_4|\varphi_{{n-k}}\rangle=-e^{\imath\theta }|\varphi_{{n-k}}\rangle.
\end{equation*} 
After step 6, we also can receive the $p$ bits  with  probability  $\frac{1}{2^{p}}$. 

In step 9, the circuit is initialized to $ |\varphi_{{n-p-k}}\rangle$, and then after applying the modified Grover’s algorithm by Long  in step 10, it can always obtain a $n-p-k$ bit string. 
 In the non-target subfunctions, applying the $L_i$ operator to $ |\varphi_{{n-p-k}}\rangle$ results in
$
 L_i |\varphi_{{n-p-k}}\rangle=-e^{\imath\omega
}|\varphi_{{n-p-k}}\rangle.
$ Therefore, by measuring, we randomly obtain the $n-p-k$ bits with  probability  $\frac{1}{2^{n-p-k}}$.

\begin{theorem}
  If  the first $p$ bits $x_{ \widetilde{i}}(x_{0}\cdots x_{p-1})$ are obtained with probability 1, then the target within an unsorted database can be precisely obtained through Algorithm \ref{algo21}.
\end{theorem}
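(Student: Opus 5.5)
The plan is to follow the target through the second stage (steps 7--11) of Algorithm \ref{algo21} and show that, given the hypothesis, step 11 returns the full target deterministically. By the hypothesis (which is exactly Lemma \ref{t1} applied under Eq. (\ref{xgzong14})), the machine handling the target subfunction $f_{\widetilde{i}}$ outputs $x_{\widetilde{i}}$ with certainty in step 6. The first step is then to identify the function built in step 7 on that machine. We have $f_{\widetilde{i},x_{\widetilde{i}}}(y)=f_{\widetilde{i}}(x_{\widetilde{i}}y)=f(x_{\widetilde{i}}y\widetilde{i})$. Since $f$ has the single marked element $t=x_{\widetilde{i}}y_t\widetilde{i}$, the function $f_{\widetilde{i},x_{\widetilde{i}}}:\{0,1\}^{n-p-k}\to\{0,1\}$ has exactly one solution, namely $y_t$.

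The second step is to invoke Algorithm \ref{algo1} with $n$ replaced by $n-p-k$ and $\lambda$ by $\theta'=\arcsin(1/\sqrt{2^{n-p-k}})$. The operator $L_i$ with the prescribed $\omega$, iterated $\lfloor(\pi/2-\theta')/(2\theta')\rfloor+1$ times from $|\varphi_{n-p-k}\rangle$, is exactly Long's modified Grover iteration for a single marked item. It therefore produces $|y_t\rangle$ up to a global phase, and the measurement in step 10 gives $y_{\widetilde{i}}=y_t$ with probability 1. I will take Long's exactness result as given from the preliminaries and not re-derive it. The only thing to check is that the unique-solution hypothesis of Algorithm \ref{algo1} holds, and that was established in the first step.

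The third step handles the non-target machines $i\neq\widetilde{i}$, which is where ambiguity could arise. For these, $f_i\equiv 0$, so $f_{i,x_i}\equiv 0$ whatever random $x_i$ was measured in step 6. As computed before the theorem, $L_i$ only multiplies $|\varphi_{n-p-k}\rangle$ by a phase, so $y_i$ is uniformly random. Step 11 then performs one classical query $f(x_iy_ii)$ per machine. This value equals $1$ iff $x_iy_ii=t$, which can only happen for $i=\widetilde{i}$. Consequently exactly one machine, the target one, passes verification, and it passes with certainty. The output $x_{\widetilde{i}}y_t\widetilde{i}$ is therefore the target with probability 1.

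The main obstacle is not computational. It is making sure the conditioning is stated cleanly: the probability-1 claim for the target machine relies on $x_{\widetilde{i}}$ being exact, because otherwise $f_{\widetilde{i},x_{\widetilde{i}}}$ could have zero solutions and Long's algorithm would give nothing useful. I would make explicit that the hypothesis gives $x_{\widetilde{i}}$ correct surely, and that $\widetilde{i}$ is unknown a priori but is recovered by the classical verification. There is also an edge case when $n-p-k=0$, where step 10 is vacuous. Here the target is already determined after step 6, and verification alone suffices.
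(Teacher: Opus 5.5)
Your proposal is correct and follows essentially the same route as the paper's proof. The hypothesis makes the target machine return $x_{\widetilde{i}}$ with certainty, so $f_{\widetilde{i},x_{\widetilde{i}}}$ has the unique solution $y_t$, which Long's algorithm returns exactly; the classical check in step 11 then identifies the full target $x_{\widetilde{i}}y_t\widetilde{i}$. Your explicit remark that $f_{i,x_i}\equiv 0$ for $i\neq\widetilde{i}$, so that no non-target machine can ever pass verification, simply spells out what the paper leaves implicit.
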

\begin{proof}
Since one of the $f_i (i\in\{0,1\}^{k})$ is necessarily the target subfunction $f_{\widetilde{i}}$, and from the condition, we  obtain the target $x_{ \widetilde{i}}$ with probability 1. Non-target subfunctions yield $x_i(i\neq \widetilde{i}, i\in\{0,1\}^{k})$ with a probability of $\frac{1}{2^{p}}$. Thus, within  $2^k$ instances of $x_i$,  the first $p$ qubits of  the target are assuredly contained. One of the ${f_{i,x_i}}(i\in\{0,1\}^{k})$ constructed from $x_i$ will also be the target subfunction. Utilizing the modified Grover’s algorithm by Long, we search all databases in parallel, allowing us to obtain the remaining $n-p-k$ qubits of the target with probability 1 and the $n-p-k$ qubits of non-target with  probability $\frac{1}{2^{n-p-k}}$. 

Upon reaching step 11, by substituting $y_i$ into ${f_{i,x_i}}$, one can ascertain that $y_t$ satisfies ${f_{i,x_i}}(y_t)=1$.
  Furthermore, since ${f_{\widetilde{i},x_i}}(y_t)={f_{\widetilde{i}}}(x_{ \widetilde{i}}y_t)$ and $f_{\widetilde{i}}(x)=f\left(x_{ \widetilde{i}}y_t\widetilde{i}\right)$,  the entire target is precisely acquired. 
\end{proof}

\subsection{Advantages of  algorithm}
\begin{figure}[h]
		\centering
		\begin{minipage}[t]{0.48\linewidth}
		\begin{adjustbox}{width=0.8\textwidth}
			\begin{quantikz}[row sep={0.7cm,between origins},column sep=0.5cm]
					&  \gate{H} 	&  \gate{X} 	&  \ctrl{1}  &  \gate{X} 	&  \gate{H} 	&\qw \\
					&  \gate{H} 	&  \gate{X} &  \ctrl{2}  	 &  \gate{X} 	&  \gate{H}  &\qw \\[2em]
                    &  \gate{H} 	&  \gate{X} &  \ctrl{1}  	 &  \gate{X} 	&  \gate{H}  &\qw \\
					&  \gate{H} 	&  \gate{X} &  \gate{U(\theta)}      &  \gate{X} 	&  \gate{H}  &\qw 		
			\end{quantikz}
		
		\hspace{-5.45cm}
				\begin{tikzpicture}[>=stealth,baseline=0cm]
					\node at (0,0.1) {$\vdots$};
                    \node at (3.7,0.1) {$\vdots$};
				\end{tikzpicture}
		\end{adjustbox}
		\caption{ The circuit of $U_{n,\theta}$}
\label{fig1}
 \end{minipage}
		\begin{minipage}[t]{0.47\linewidth}
		\begin{adjustbox}{width=0.8\textwidth}
			\begin{quantikz}[row sep={0.7cm,between origins},column sep=0.5cm]
					  	&  \gate[][1.6cm]{X^{1-t_1}} 	&  \ctrl{1}  & \gate[][1.6cm]{X^{1-t_1}} 	 	&\qw \\
					  	&  \gate[][1.6cm]{X^{1-t_2}}  &  \ctrl{2}  	 &  \gate[][1.6cm]{X^{1-t_2}} 	  &\qw \\[2em]
                   	&  \gate[][1.6cm]{X^{1-t_{n-1}}}  &  \ctrl{1}  	 &  \gate[][1.6cm]{X^{1-t_{n-1}}}  	 &\qw \\
						&  \gate[][1.6cm]{X^{1-t_n}}  &  \gate{U(\phi)}      &  \gate[][1.6cm]{X^{1-t_n}} 	  &\qw 		
			\end{quantikz}
		
		\hspace{-5.38cm}
				\begin{tikzpicture}[>=stealth,baseline=0cm]
					\node at (0,0.1) {$\vdots$};
                    \node at (3.7,0.1) {$\vdots$};
				\end{tikzpicture}
		\end{adjustbox}
\caption{The circuit of $U_{f}^{\phi}$}
 \label{fig2}
 \end{minipage}
	\end{figure}
It is not difficult to see that our algorithm requires at most \( n-k \) qubits. According to Ref.\cite{Figgatt2017},  the circuits  of  $U_{n, \theta}=(1-e^{\imath\theta })|\varphi_{n}\rangle\langle\varphi_{n}|-I_{n}$ and $U_{f}^{\phi}$ are shown in \textbf{Figs. \ref{fig1}} and \textbf{ \ref{fig2}}, where $\theta, \phi\in \mathbb{R}$. The role of $U_{f}^{\phi}$ is to apply the phase $e^{\imath\phi}$ to the target  $|t\rangle=|t_1t_2\cdots t_{n}\rangle$, leaving non-target states unchanged.  There are different ways of constructing $U_{f}^{\phi}$ for different Boolean function $f$, we take a simple approach of Fig. 3. In {Fig. 3 and Fig. 4}, $U(\theta)$ and $U(\phi)$ are defined as follows,\begin{equation*}U(\theta)=
\begin{pmatrix} 1 & 0 \\ 0 & e^{\imath\theta} \end{pmatrix},U(\phi)=
\begin{pmatrix} 1 & 0 \\ 0 & e^{\imath\phi} \end{pmatrix}.\end{equation*}

 The $U(\theta)$ with $n-1$ control qubits is denoted as $\wedge_{n}(U(\theta))$.   { $\wedge_{n}(U(\theta))$ is a multicontrolled gates. Consider its decomposition under the Clifford+T gate. According to Theorem 3 in Ref. \cite{Barenco1995}, the following lemma is obtained.} 
 
\begin {lemma} \label{fen1}
  $\wedge_{n}(U(\theta))(n\ge7)$ can be implemented using one ancilla qubit in state \( |0\rangle \) and  a constant number of gates from \(\{R_{\hat{x}}, R_{\hat{z}}, H\}\),  in addition to
\begin{enumerate}
    \item CNOT cost \( 12n - 36 \) and depth \( 8n-8 \),
    \item \( T \) cost \( 16n - 64 \) and depth \( 8n - 14 \) \rm{(}\( 8n - 11 \) for odd \( n-1 \)\rm{)},
    \item \( H \) cost \( 8n - 40 \) and depth \( 4n - 15 \),
\end{enumerate}
where \(\{R_{\hat{x}}, R_{\hat{z}}\}\) denote the single-qubit rotation gates about the \(x\)-axis and \(z\)-axis.
\end{lemma}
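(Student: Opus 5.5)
The plan is to reduce $\wedge_{n}(U(\theta))$ to a multi-controlled Toffoli plus a constant number of single-qubit gates, and then import a known Clifford+T cost count for the Toffoli.

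\textbf{Step 1 (reduction to Lemma 7.11 of Barenco et al.).} Write $U(\theta)=e^{\imath\theta/2}R_{\hat z}(\theta)$. Controlled on the $n-1$ control qubits, the global phase $e^{\imath\theta/2}$ becomes an $(n-2)$-controlled phase. Rather than recursing on that phase, I would absorb it using the ancilla trick in the spirit of Theorem 3 in Ref.~\cite{Barenco1995}:
\begin{enumerate}
\item compute the AND of the $n-1$ controls into the ancilla $|0\rangle$ with a $\wedge_{n}(X)$;
\item apply the single-qubit phase $U(\theta)$ as a controlled-$U(\theta)$ from the ancilla to the target;
\item uncompute the ancilla.
\end{enumerate}
A controlled-$U(\theta)$ decomposes as $U(\theta/2)$ on the control, $R_{\hat z}(\theta/2)$, CNOT, $R_{\hat z}(-\theta/2)$, CNOT on the target. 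This uses a constant number of $R_{\hat z}$ gates and two CNOTs.

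An alternative is Barenco's Lemma 7.9/7.11 construction, $\wedge_{n}(W)=A\,\wedge_{n-1}(X)\,B\,\wedge_{n-1}(X)\,C$ with $ABC=I$. With a borrowed ancilla this needs only two multi-controlled Toffolis on $n-1$ controls, plus constant single-qubit rotations. I would choose whichever version matches the stated constants.

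\textbf{Step 2 (Toffoli cost).} Each multi-controlled Toffoli with $m=n-1$ controls and one dirty or clean ancilla is realized, following Barenco's Lemma 7.2/7.3, as a ladder of $O(m)$ Toffoli gates. Each Toffoli is then implemented in Clifford+T. To reach the stated counts, I would use relative-phase Toffolis (Margolus-type) in the ladder interior, which cost 3 CNOTs and 4 $T$ gates each, and full Toffolis only where the phase does not cancel.

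Summing over the two (or four) ladders gives linear expressions $an+b$ for the CNOT, $T$ and $H$ counts. I would then check the coefficients against $12n-36$, $16n-64$ and $8n-40$. Each $T$ count comes in pairs, so the $16n$ coefficient is consistent with $4n$ relative-phase Toffolis times 4.

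\textbf{Step 3 (depth).} I would compute depth by noting that consecutive Toffolis in a ladder share qubits and therefore serialize. $T$ layers can be parallelized within a Toffoli to depth 3 or 4, which leads to the $8n-14$ figure. The parity of $n-1$ decides whether the two halves of the ladder split evenly, which explains the $8n-11$ variant. The hypothesis $n\ge7$ is needed so that the ladders are long enough for the generic counting, with no degenerate small cases.

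\textbf{Main obstacle.} The hard part is the exact bookkeeping of constants, above all the depth counts. My approach there would be to cite the refined Clifford+T analysis of the Barenco construction from the literature for the multi-controlled $X$ with one ancilla, rather than rederive it. The lemma would then follow by adding the constant overhead of the single-qubit rotations.
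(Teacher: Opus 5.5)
Your Step~1 circuit (compute the AND into the clean ancilla, apply a controlled-$U(\theta)$ from ancilla to target, uncompute) does realize $\wedge_{n}(U(\theta))$. But the lemma's whole content is the exact constants, and your proposal establishes none of them. ``Choose whichever version matches'', ``check the coefficients'', the pairing heuristic for $16n$ and the parity story for $8n-11$ are placeholders, not derivations.

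The construction you actually name cannot reach those constants. Computing the AND of the $n-1$ controls into the ancilla, with the target as the one borrowed qubit, is a $\wedge_{N-2}(X)$ on $N=n+1$ lines. Barenco's Lemma~7.3 split into Lemma~7.2 ladders realizes it with two ladders of $4(m-2)$ and two of $4(N-m-3)$ Toffolis, i.e.\ $8(N-5)=8(n-4)$ Toffolis. Compute plus uncompute is therefore $16(n-4)$ Toffolis. Make every one a 4-$T$, 3-CNOT relative-phase Toffoli; this is legitimate here, because the middle gate is diagonal and the relative phases cancel. You still get:
\begin{itemize}
\item $64(n-4)$ $T$ gates, exactly four times the claimed $16n-64$;
\item $48(n-4)$ CNOTs, asymptotically four times $12n-36$.
\end{itemize}
Consecutive Toffolis in those ladders share a qubit, so the depths grow with these counts rather than staying near $8n$. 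Your own heuristic says the $16n$ coefficient corresponds to about $4n$ relative-phase Toffolis in total, while your construction uses about $16n$. The $A\,\wedge(X)\,B\,\wedge(X)\,C$ alternative costs the same order. Moreover, with $A,B,C\in SU(2)$ it only produces determinant-one targets, so the controlled phase $e^{\imath\theta/2}$ of $U(\theta)$ would still be left over.

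The missing ingredient is a multi-controlled-$X$ implementation roughly four times cheaper than Barenco's one-borrowed-qubit splitting: about $2n$ relative-phase Toffolis per use, i.e.\ about $8n$ $T$ gates and $6n$ CNOTs. On top of that you need exact gate and depth bookkeeping that also absorbs the two CNOTs of the controlled-$U(\theta)$.

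The paper does not derive any of this either. Its entire justification is an appeal to Theorem~3 of the reference it cites, from which the counts, the threshold $n\ge 7$ and the parity-dependent $T$-depth are taken as stated. Your closing fallback is to cite a Clifford+T analysis of a single multi-controlled $X$ and add ``constant overhead''. That is not a substitute: doubling such a count and adding the controlled-$U(\theta)$ does not by itself give $12n-36$, $16n-64$, $8n-40$ or the stated depths. You would need to cite, or actually reprove, a result stated for $\wedge_{n}(U(\theta))$ with one clean ancilla and exactly these costs.
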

{We define the depth of $\wedge_{n}(U(\theta))(n\ge7)$ as the maximum depth among all gates in the Clifford+T set, i.e., $d(\wedge_{n}(U(\theta)))=\max(d(T), d(H), d(CNOT))$. Therefore, $d(\wedge_{n}(U(\theta)))=8n-8$.} Next, we will use a theorem to demonstrate the circuit depth of Algorithm \ref{algo21}.

\begin {theorem}  \label{fen2}
{ If $p\ge4$, $n-p-k\ge7$},  then the circuit depth of  Algorithm \ref{algo21} is gradually decreasing. 
\end {theorem} 
\begin{proof}
We first calculate the circuit depths for executing \( G_{2} \), \( G_{3} \), and \( G_{4} \) once, respectively.

When $\theta=\phi=\pi$, we have $U(\theta)=U(\phi)=Z$, and $U_{n-k, \theta}$ represents the circuit of $U_{{n-k, \pi}}$, while $U_{f}^{\phi}$ represents the circuit of $U_{f}^\pi$.
{According to Lemma \ref{fen1}, we derive that  the circuit depth of $U_{f_{i}}^\pi$ is $$d(U_{f_{i}}^\pi)=2+d(\wedge_{n-k}(U(\pi)))=8(n-k)-6.$$ The circuit depth for $U_{{n-k, \pi}}$ is  $$d(U_{{n-k, \pi}})=4+d(\wedge_{n-k}(U(\pi)))=8(n-k)-4.$$ 
It can be further concluded that \begin{equation}\label{h1}d(G_{2}) =d(U_{{n-k, \pi}})+d(U_{f_{i}}^\pi)=16{(n-k)}-10;\end{equation}
\begin{equation}\label{h2}d(G_{3}) =d(U_{{n-p-k, \pi}})+d(U_{f_{i}}^\pi)=16(n-k)-8p-10.\end{equation}
We have $$d(G_{2})-d(G_{3})=8p>0.$$
  For the circuit depth of $L_i$, there is \begin{equation}\label{h3}d(L_i) =d(U_{{n-p-k, \omega}})+d(U_{f_{i}}^{\omega})=16{(n-k-p)}-10.\end{equation}
 Furthermore, because of $$d(G_{3})-d(L_i)=8p>0,$$}
 we have $d(G_{2})=d(G_{4})>d(G_{3})>d(L_i).$
For the circuit depth, there is
\begin{align}\label{zong1516}
&p_{1}d(G_{2})+p_{2}d(G_{3})+d(G_{4})-\left(\lfloor(\pi/2-\theta')/(2\theta')\rfloor+1\right)d(L_i) \nonumber \\
&>\left(p_{1}+p_{2}-\lfloor(\pi/2-\theta')/(2\theta')\rfloor\right)d(L_i)\nonumber \\
&>\left(p_{1}+p_{2}-\frac{\pi}{4\theta'}\right)d(L_i) \nonumber \\
&>\left[\left(\frac \pi 4+\frac{\eta_p- \gamma_p}{\sqrt {2^{p}}}-\frac{\pi}{4\sqrt {2^{p}}}\right)\sqrt {2^{n-k}}-1\right]d(L_i)
\end{align}
 The last inequality is obtained using $x-\frac{1}{2} \le \lfloor x \rceil$ and $\theta'=\arcsin{\frac {1} {\sqrt{2^{n-p-k}}}}\approx\frac {1} {\sqrt{2^{n-p-k}}}$. 
 
 Next, we will focus on analyzing $\frac \pi 4+\frac{\eta_p- \gamma_p}{\sqrt {2^{p}}}$. Let 
\begin{equation}\label{zong1512}
f(x)=\frac \pi 4+\frac{\arcsin\left(\frac{\sqrt {x}}{\sqrt{4(x-1)}}\right)}{\sqrt {x}}-\frac{1}{2}\arctan\left(\frac{\sqrt{3 x-4}}{x-2}\right), x\ge16.
\end{equation}
The derivative of the function $f(x)$ is given by
 \begin{equation}\label{zong1512}
f^{\prime}(x)=\frac{\sqrt{3 x-4}}{4x(x-1)}-\frac{\arcsin\left(\frac{1}{2\sqrt{1-\frac{1}{x}}}\right)}{2x^{\frac{3}{2}}}.
\end{equation}
Simplifying Eq. (\ref{zong1512}), we obtain 
\begin{equation}\label{zong1513}
f^{\prime}(x)=\frac{x^{\frac{1}{2}}\sqrt{3x-4}-2(x-1)\arcsin\left(\frac{1}{2\sqrt{1-\frac{1}{x}}}\right)}{4x^{\frac{3}{2}}(x-1)}.
\end{equation}
Since $x>2$, it follows that $\frac{1}{2}<\frac{1}{2\sqrt{1-\frac{1}{x}}}<\frac{1}{\sqrt{2}}$. Therefore, $\frac{\pi}{6}<\arcsin\left(\frac{1}{2\sqrt{1-\frac{1}{x}}}\right) < \frac{\pi}{4}$.
Thus, we have $$x^{\frac{1}{2}}\sqrt{3x-4}-2(x-1)\arcsin\left(\frac{1}{2\sqrt{1-\frac{1}{x}}}\right)>x^{\frac{1}{2}}\sqrt{3x-4}-\frac \pi 2 (x-1)>0.$$
From this, we know that $f^{\prime}(x)> 0$, which means $f(x)$ is an increasing function. As a result, the minimum value of  $\frac \pi 4+\frac{\eta_p- \gamma_p}{\sqrt {2^{p}}}$ occurs at {$p=4$}. For {$p\ge4$, $\frac \pi 4+\frac{\eta_p- \gamma_p}{\sqrt {2^{p}}}\ge 0.699$. Moreover, since $\frac \pi 4+\frac{\eta_p- \gamma_p}{\sqrt {2^{p}}}-\frac{\pi}{4\sqrt {2^{p}}}\ge 0.699-\frac {\pi} {16}>0.5$,} it follows that $$\left(\frac \pi 4+\frac{\eta_p- \gamma_p}{\sqrt {2^{p}}}-\frac{\pi}{4\sqrt {2^{p}}}\right)\sqrt {2^{n-k}}-1>0.$$
Ultimately, we can obtain \begin{equation}\label{zong16}
p_{1}d(G_{2})+p_{2}d(G_{3})+d(G_{4})>\left(\lfloor(\pi/2-\theta')/(2\theta')\rfloor+1\right)d(L_i).\end{equation}
In summary, it can be concluded that the circuit depth is gradually decreasing.
\end{proof}

\begin {remark} 
 Algorithm \ref{algo21} can be further extended to multiple stages. Specifically, by setting \( n-k = \sum_{i'=1}^{m'} p_{i' } \), we can iteratively apply steps 1 to 6 to compute ${p_{i'}}(i'\in\{ 1, \cdots, m'-1\})$ at each stage, which significantly reduces the circuit depth.
\end {remark}
According to Theorem \ref{fen2}, our maximum circuit depth primarily depends on the first stage of IDGS algorithm. Next, we consider the trade-off between the number of nodes and the total query complexity. The total query complexity is defined as the sum of the query complexities of individual nodes. Let's first look at the query complexity of a single node. The query complexity of  IDGS algorithm in the first phase is $$ p_{1}+p_{2}+1\approx\lfloor\frac \pi 4\sqrt{ 2^{n-k}}-0.34\sqrt{ 2^{n-p-k}}\rfloor+1.$$  The query complexity in the second phase is $$\left(\lfloor(\pi/2-\theta')/(2\theta')\rfloor+1\right)\approx\lfloor\frac \pi 4\sqrt{ 2^{n-p-k}}\rfloor +1.$$ Therefore, the query complexity of IDGS algorithm  for a single node is \begin{align*}&\lfloor\frac \pi 4\sqrt{ 2^{n-k}}-0.34\sqrt{ 2^{n-p-k}}\rfloor+1+ \lfloor\frac \pi 4\sqrt{ 2^{n-p-k}}\rfloor +1\\
\approx&\frac \pi 4\sqrt{ 2^{n-k}}+0.45\sqrt{ 2^{n-p-k}}+2.\end{align*}
The query complexity of a single node ignores the error introduced by rounding. 
The total query complexity is $$ \frac \pi 4\sqrt{ 2^{n+k}}+0.45\sqrt{ 2^{n-p+k}}+2.$$

In the single-objective case, we use  \textbf{Table\ref{tabel1}} to present the comparison between our distributed algorithm and other algorithms in Refs.\cite{Grover1996, Long2001, Qiu2022, Zhou2023, Avron2021, LiH2024}. In Table 1, $d (U_{{n}}U_{f}^{\phi})$ represents the circuit depth of a single query by the algorithm, where the subscript $n$ indicates the number of qubits, and 
$\phi$ represents the angle of rotation towards the target.   A crucial part of Ref.\cite{Zhou2023} is constructing the OR function, which takes $2^{n-2}$ input variables. Any precise quantum algorithm designed to compute this OR function would necessitate at least $\Omega(2^{n-3})$ queries\cite{De2002}.
 It should be noted that the circuit depth listed in Table 1 for Ref.\cite{Zhou2023} is obtained by multiplying the query complexity required to compute the OR function by the circuit depth of each query.   {The total query complexity for each algorithm in Table 1 is obtained by calculating the number of calls to $U_{f}$.} The $c$ in the Table 1 is  $c= \gamma_p-\eta_p>0$. The numerical experiment shows that $c$ is approximately 0.34.
From Table 1, it is clear that our algorithm shows  advantages in circuit depth.

\begin{table}[h] 
	\centering
	\caption{Comparisons of our algorithms with other  algorithms in \cite{Grover1996, Long2001, Qiu2022, Zhou2023, Avron2021, LiH2024}.}
	\resizebox{.9\columnwidth}{!}{
	\begin{tabular}{*{6}{c}}
		\toprule
		          Algorithms      &  {Number of qubits} & \makecell[c]{ Accurate}
		                 &  \makecell[c]{Quantum \\communication \\ complexity}&  {Circuit depth }&  {Total query complexity}\\
		\hline
		  \makecell[c]{ Grover's \\algorithm\cite{Grover1996} }      &       $n$   & \makecell[c]{inaccurate}  
		  & 0 &{$\lfloor\frac \pi 4\sqrt{ 2^{n}}\rfloor d(U_{{n}}U_{f}^{\pi})$}  & {$\lfloor\frac \pi 4\sqrt{ 2^{n}}\rfloor$}\\
		  
		  \makecell[c]{Modified \\Grover's algorithm \\ by Long\cite{Long2001}}         &      $n$ & accurate      &    0  &\makecell[c]{$\left(\lfloor\frac \pi 4\sqrt{ 2^{n}}\rfloor +1\right) d(U_{{n}}U_{f}^{\phi})$} & {$\lfloor\frac \pi 4\sqrt{ 2^{n}}\rfloor +1$}\\

 \makecell[c]{The algorithm \\  by Qiu et al\cite{Qiu2022} }      &        $n-k$ & \makecell[c]{inaccurate}  
		  & 0&{$\lfloor\frac \pi 4\sqrt{ 2^{n-k}}\rfloor d(U_{{n-k}}U_{f}^{\pi})$}  & {$2^{k}\lfloor\frac \pi 4\sqrt{ 2^{n-k}}\rfloor$}\\

  \makecell[c]{The algorithm\\ by Li et al\cite{LiH2024}}     &   \makecell[c]{$\max\left\{
\begin{array}{l}
n-k+1, \\
2^k + k + 1
\end{array}
\right\}$ } & \makecell[c]{ accurate}       
		  &    $O\left(n^2\sqrt{2^n}\right)$   & \makecell[c]{Greater than\\ $\left(\lfloor\frac \pi 4\sqrt{ 2^{n-k}}\rfloor +1\right) d(U_{{n-k+1}}U_{f}^{\phi})$}         & {$2^{k+1}\lfloor\frac \pi 4\sqrt{ 2^{n}}\rfloor$}\\[0.3cm]
		 
		 \makecell[c]{The algorithm \\  by Zhou et al\cite{Zhou2023} }      &        $2 + (n \bmod 2)
$ & \makecell[c]{ accurate}   
		  & 0       &\makecell[c]{Greater than\\or equal to\\$ 2^{n-3}d(U_{{3}}U_{f}^{\phi})$}  & {$\frac {n} {2} 2^{n-3}$}\\[0.3cm]
		  
		  \makecell[c]{The algorithm \\  by Avron et al\cite{Avron2021} }      &        $n-1$ & \makecell[c]{inaccurate}   
		  & 0 &{$\lfloor\frac \pi 4\sqrt{ 2^{n-1}}\rfloor d(U_{{n-1}}U_{f}^{\pi})$}  & {2$\lfloor\frac \pi 4\sqrt{ 2^{n-1}}\rfloor$}\\
		  
		  \makecell[c]{ Our Algorithm \ref{algo21}\\   }      &        $n-k$ & \makecell[c]{ accurate}   
		  & 0 &\makecell[c]{$\approx\lfloor\frac \pi 4\sqrt{ 2^{n-k}}-c\sqrt{ 2^{n-p-k}}\rfloor d(U_{{n-k}}U_{f}^{\phi})$}  & {$ \frac \pi 4\sqrt{ 2^{n+k}}+0.45\sqrt{ 2^{n-p+k}}+2$}\\ [0.3cm]
		\toprule
	\end{tabular}\label{tabel1}
	}
\end{table}

\section{Numerical experiment}\label{s5}
In this section, we utilize MindQuantum\cite{mind} to conduct   {simulation experiments}, thereby further illustrating the correctness of our algorithm. After introducing noise into the quantum system, we compare the IDGS algorithm with the modified Grover’s algorithm by Long\cite{Long2001} and find that the IDGS algorithm has noise resistance capability.

\subsection{5-qubit system without noise}
Given Boolean function $f:\{0,1\}^5 \rightarrow \{0,1\}$ with $f(01100) = 1$. Let \( p= 2 \) and \( k = 1 \), we find a two-bit substring each time. The function \( f \) is thereby divided into two subfunctions: \( f_0(x) = f(x0) \) and \( f_1(x) = f(x1) \), $x\in\{0,1\}^4$. Since the target $t=01100$, the target subfunction is $f_0$, and the non-target subfunction is $f_1$. We run Algorithm \ref{algo21} in each of these two subfunctions, respectively. Initially, we  find the first two qubits within \( f_i \)$(i=0,1)$.

Here, we need to clarify one point: the circuit in  Fig. \ref{ IDGS algorithm} measures the bits of the previous $p$, but the implementation of the circuit is reversed.  This is because, in MindQuantum, the order of the bits in register is usually opposite to the numbering of the qubits (for example, the measurement result "011" corresponds to $|q_2\rangle = 0, |q_1\rangle = 1, |q_0\rangle = 1)$.

According to Eq. (\ref{cqu1}), it follows that \( p_1 \approx \lfloor 1.2309 \rceil=1 \) and \( p_2 \approx \lfloor 1.2309 \rceil=1 \). The two rotational angles  are $\theta_1=\arcsin{\frac {1}{4}}$ and $\theta'=\arcsin{\frac {1}{2}}$. From Lemma \ref{l8}, we obtain the amplitude as follows \begin{equation*}\label{can1}
\begin{aligned}
&{a_{t}}=\sin \left(3\arcsin\frac{1}{4} \right)\cos \left(2 \arcsin\frac{1}{2}\right)+\frac{\cos \left(3\arcsin\frac{1}{4}\right) \sin \left(2 \arcsin\frac{1}{2}\right)}{\sqrt{5}}>0,\\
&{a_{n_t}}=-\sin \left(3\arcsin\frac{1}{4} \right)\sin \left(2 \arcsin\frac{1}{2}\right)+\frac{\cos \left(3\arcsin\frac{1}{4}\right) \cos \left(2 \arcsin\frac{1}{2}\right)}{\sqrt{5}}.\end{aligned} \end{equation*}
Parameter values for $\theta$ and $\phi$ are obtained from Lemma \ref{720}, i.e., 
\begin{align*}
&\theta=\pm\arccos\frac{2^{2n-2k-1}F^2-2^{n-k}EF+E^2-a_{t}^2}{E^2-2^{n-k}EF-a_{t}^2}\approx\pm2.3520, \\
&\phi=\pm\arccos\left(\frac{2^{n-k-1}F-E}{a_{t}}\right)\approx\pm1.5708,
\end{align*}
where $$E=\frac{3}{2}, F=\frac{3}{\sqrt{16}}.$$
Thus, the parameters for step 5 are $\theta=2.3520$ and $\phi=1.5708$ or $\theta=-2.3520$ and $\phi=-1.5708$. With parameters $\theta$ and $\phi$, we will next execute the Algorithm \ref{algo21} on MindQuantum, assuming that both parameters are positive numbers.

Consequently, the circuits of the search algorithm in the first phase are shown in \textbf{Figs. \ref{p}} and \textbf{ \ref{pa}}. It is worth noting that, since $f_1$ is a non-target subfunction, $U_ {f_1}$ acts as the identity operator. In the design of the circuit, we leverage this property to ensure optimal functionality.
\begin{figure*}[htbp]
		\centering
		\includegraphics[width=\linewidth]{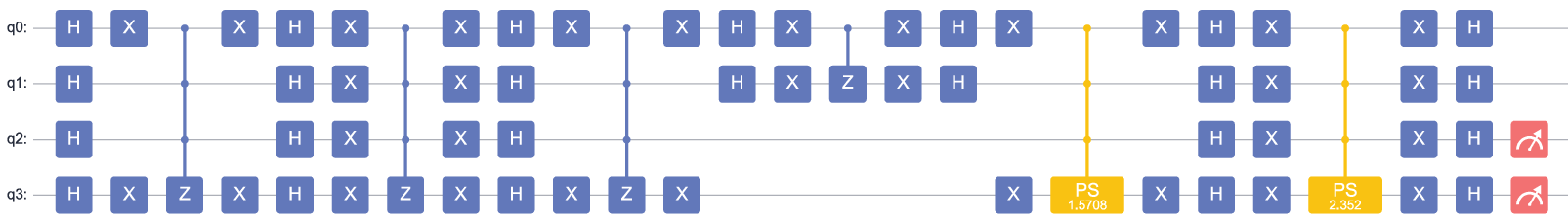}
		\setlength{\abovecaptionskip}{-0.01cm}
		\caption{The circuit for  $f_0$ in the first phase of   IDGS algorithm.}
		\label{p}
	\end{figure*}
	\begin{figure*}[htbp]
		\centering
		\includegraphics[width=5.5in]{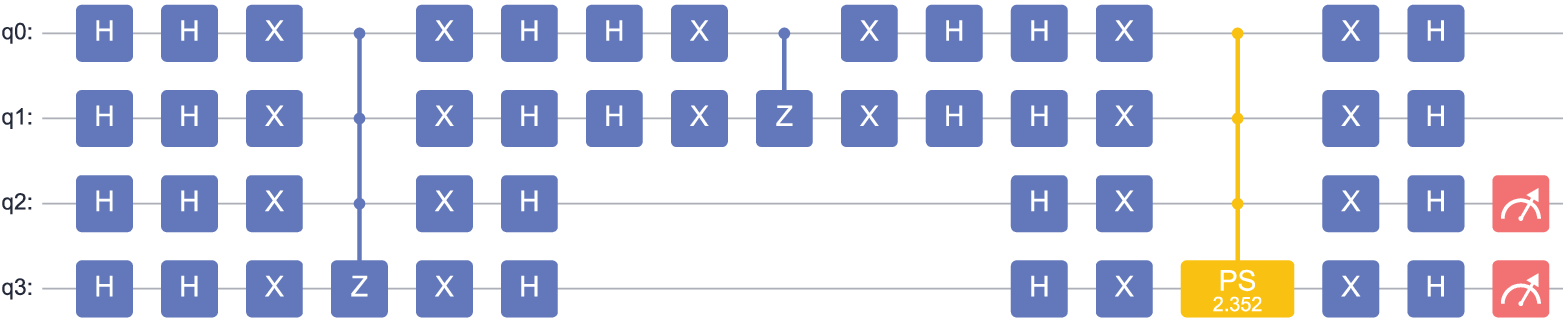}
		\setlength{\abovecaptionskip}{-0.01cm}
		\caption{The circuit for  $f_1$ in the first phase of   IDGS algorithm.}
		\label{pa}
	\end{figure*}
By sampling the circuit  in  {Fig. 5}, we obtain the first two bits of the target, which are $01$, with the sampling results shown in \textbf{Fig. \ref{p2}}. Sampling the circuit for the non-target subfunction $f_1$ yields \textbf{Fig. \ref{p3}}. From this figure, it is evident that each string is equally probable. Hence, we can assume, for simplicity, that the outcome of the measurement is $10$. 
\begin{figure*}[htbp]
		\centering
		\includegraphics[width=4in]{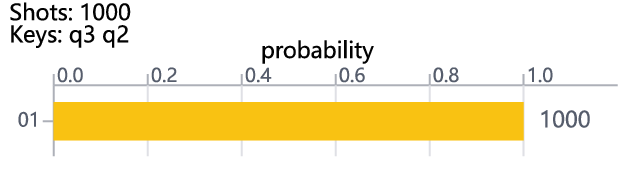}
		\setlength{\abovecaptionskip}{-0.01cm}
		\caption{Sampling results  of the circuit in {Fig. 5}.}
		\label{p2}
	\end{figure*}
\begin{figure*}[htbp]
		\centering
		\includegraphics[width=4in]{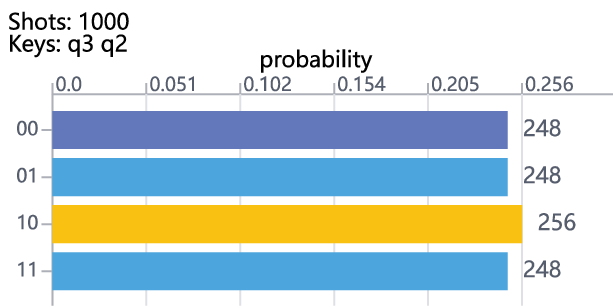}
		\setlength{\abovecaptionskip}{-0.01cm}
		\caption{Sampling results  of the circuit in {Fig. 6}.}
		\label{p3}
	\end{figure*}

Utilizing the information obtained from previous measurements, we further construct the Boolean functions $f_{0,01}:\{0,1\}^2 \rightarrow \{0,1\}$ and $f_{1,10}:\{0,1\}^2 \rightarrow \{0,1\}$, where  ${f_{0,01}}(y)= {f_{0}}(01y), {f_{1,10}}(y)= {f_{1}}(10y)$, and $y\in\{0,1\}^2.$
 Then, using the modified Grover’s algorithm by Long to find the remaining target strings, their circuits are shown in \textbf{Figs. \ref{p9}} and \textbf{ \ref{p10}}. The values of the rotation angle parameters in {Fig. 9 and Fig. 10} are as follows:$$\omega=2\arcsin\left(\sin\left(\frac{\pi}{4\lfloor(\pi/2-\theta')/(2\theta')\rfloor+6}\right) / \sin  \theta'\right)\approx3.1416,$$
where $\theta'=\arcsin{\frac {1}{2}}$. Similarly, by sampling the circuit  of {Fig. 9} , we obtained the last two bits of the target, namely 10, with the sampling results shown in \textbf{Fig. \ref{p11}}. By sampling  the circuit of the non-target subfunction  $f_{1}$, we obtain \textbf{Fig. \ref{p12}}. {Fig. 12} indicates that we obtain an equal probability for each quantum state,  so it is reasonable to assume that the state obtained from the measurement is 10.
\begin{figure*}[htbp]
		\centering
		\includegraphics[width=\linewidth]{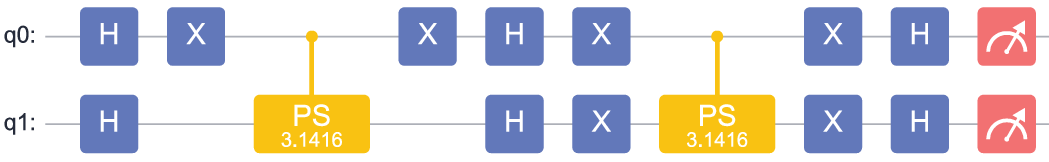}
		\setlength{\abovecaptionskip}{-0.01cm}
		\caption{The circuit for  $f_{0,01}$ in the final phase of   IDGS algorithm.}
		\label{p9}
	\end{figure*}
	
	\begin{figure*}[htbp]
		\centering
		\includegraphics[width=6in]{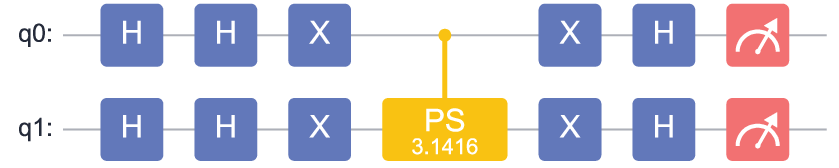}
		\setlength{\abovecaptionskip}{-0.01cm}
		\caption{The circuit for  $f_{1,10}$ in the final phase of   IDGS algorithm.}
		\label{p10}
	\end{figure*}
	
	\begin{figure*}[htbp]
		\centering
		\includegraphics[width=4in]{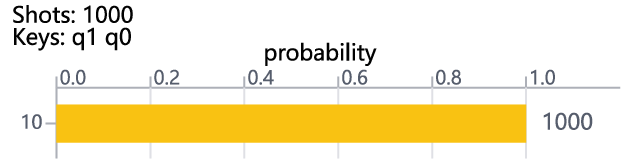}
		\setlength{\abovecaptionskip}{-0.01cm}
		\caption{Sampling results of the circuit in {Fig. 9}.}
		\label{p11}
	\end{figure*}
	
	\begin{figure*}[htbp]
		\centering
		\includegraphics[width=4in]{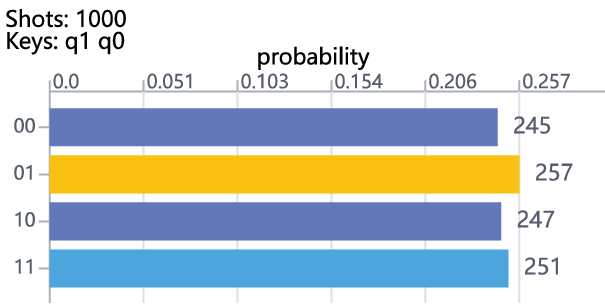}
		\setlength{\abovecaptionskip}{-0.01cm}
		\caption{Sampling results  of the circuit in {Fig. 10}.}
		\label{p12}
	\end{figure*}
Ultimately, by substituting the result $10$ into $f_{0,01}$ for verification, we find that $f_{0,01}(10)=1, f_{1,10}(10)=0$, thereby precisely obtaining the target $0110$ within the target subfunction $f_0$. Therefore, the target $01100$ in $f$ is successfully found.

\subsection{12-qubit system without noise}
Next, we consider larger-scale noise-free experiments. 
Given Boolean function $f:\{0,1\}^{12} \rightarrow \{0,1\}$ with $f(111000001111) = 1$. Let \( p= 3 \) and \( k = 1 \). The function \( f \) is  divided into two subfunctions: \( f_0(x) = f(x0) \) and \( f_1(x) = f(x1) \), $x\in\{0,1\}^{11}$. Since the target $t=111000001111$, the target subfunction is $f_1$, and the non-target subfunction is $f_0$.We run Algorithm \ref{algo21} in each of these two subfunctions, respectively. We first search the first three bits.

Based on the analysis of Algorithm \ref{algo21}, we first give the values of the parameters involved. According to Eq. (\ref{cqu1}), it follows that \( p_1 \approx \lfloor 21.0497 \rceil=21 \) and \( p_2 \approx \lfloor 9.0231 \rceil=9 \). The two rotational angles  are $\theta_1=\arcsin{\frac {1}{32\sqrt{2}}}$ and $\theta'=\arcsin{\frac {1}{16}}$. From Lemma \ref{l8}, we obtain the amplitude ${a_{t}}>0$. 
Parameter values for $\theta$ and $\phi$ are obtained from Lemma \ref{720}, i.e., 
\begin{equation*}
\theta\approx\pm3.0962, \quad
\phi\approx\pm0.5911.
\end{equation*}
Thus, the parameters for step 5 are $\theta=3.0962$ and $\phi=0.5911$ or $\theta=-3.0962$ and $\phi=-0.5911$. After both parameters take positive values, we execute the algorithm on MindQuantum.

Since the detailed circuit  for the two subfunctions under Algorithm \ref{algo21} in the small-scale case have been provided in the previous section, for the larger-scale scenario, although the design method is exactly the same, the complete circuit is too extensive. Therefore, we directly present the sampling result graph for the target subfunction under Algorithm \ref{algo21} here. The results are shown in \textbf{Figs. \ref{p21} }and \textbf{\ref{p22}}, which once again verify the accuracy of our algorithm.
\begin{figure*}[htbp]
		\centering
		\includegraphics[width=4in]{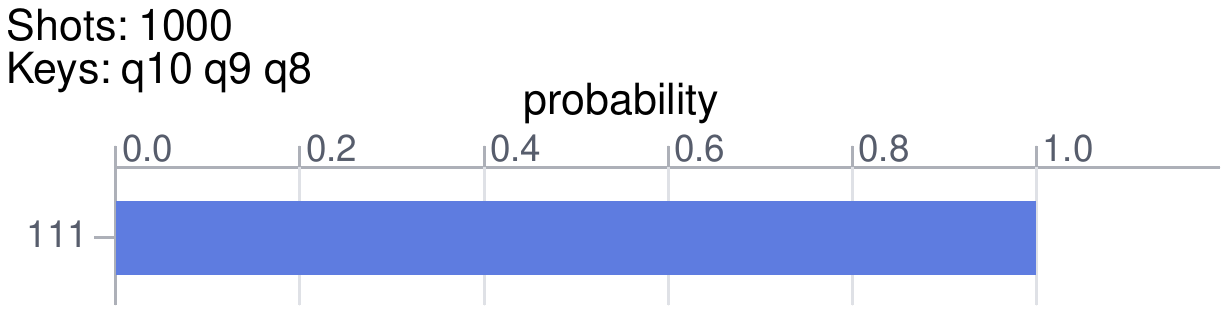}
		\setlength{\abovecaptionskip}{-0.01cm}
		\caption{Sampling results  of the circuit for $f_1$ in the first phase of   IDGS algorithm..}
		\label{p21}
	\end{figure*}
	\begin{figure*}[htbp]
		\centering
		\includegraphics[width=4in]{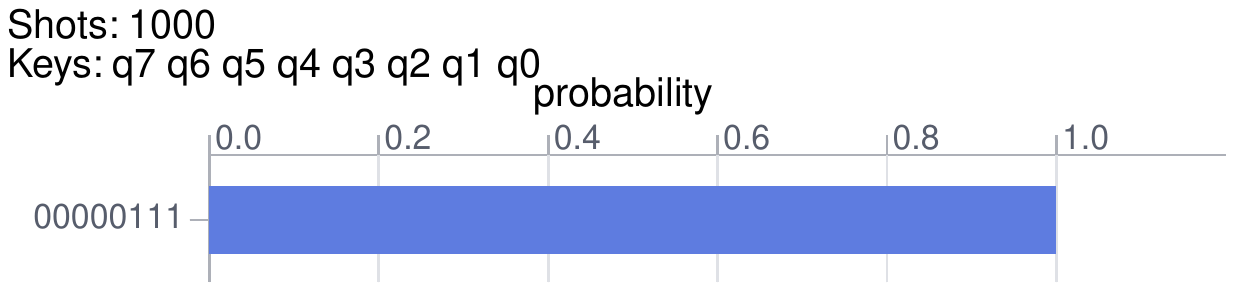}
		\setlength{\abovecaptionskip}{-0.01cm}
		\caption{Sampling results  of the circuit for $f_{1,111}$ in the final phase of   IDGS algorithm. .}
		\label{p22}
	\end{figure*}
	
Ultimately, by substituting the result $00000111$ into $f_{1,111}$ for verification, we find that $f_{1,111}(00000111)=1$, thereby precisely obtaining the target $11100000111$ within the target subfunction $f_1$. Therefore, the target $111000001111$ in $f$ is successfully found.

Now consider the circuit depth problem of our algorithm at this scale. Given that $n=12, k=1, p=3$, according to Eqs. (\ref{h1}), (\ref{h2}), and (\ref{h3}), we have $$d(G_{2})=166, d(G_{3})=142, d(L_i)=118.$$
Thus, the circuit depth of the first stage is $$166\times21+142\times9+166=4930.$$ Since the query complexity of the final stage is $$\lfloor\frac \pi 4\sqrt{ 2^{n-p-k}}\rfloor +1=13,$$ the circuit depth of the final stage is $118\times13=1534$. Therefore, the circuit depth of our algorithm is $4930.$
 Next, analyze the circuit depth of Grover's algorithm. Since the query complexity  is $$\lfloor\frac \pi 4\sqrt{ 2^{n}}\rfloor =49,$$
and combining Eq. (\ref{h1}), we have $182\times49=8918.$
Compared with the circuit depth of our algorithm, the circuit depth of Grover's algorithm increases by 3988.

\subsection{5-qubit system under the amplitude damping channel}
In practice, quantum systems are not closed, they are subject to environmental interference. Therefore, when designing algorithms, it is usually necessary to consider their robustness. Consequently, based on the previous experiment with the 5-qubit system, we introduce amplitude damping noise into the experiment to test the robustness of algorithm.

The amplitude damping channel describes the dissipation of energy in a quantum system\cite{Nielsen2001}.  For any quantum state $\rho$,  if it undergoes the amplitude damping channel $\Phi$, then it will become 
$$
\Phi (\rho )=\sum_{i=0}^{1} E_i \rho E_i^{\dagger}, 
$$
where
\[
E_0=\begin{pmatrix}
  1 & 0  \\
  0 &  \sqrt{1-\gamma}
\end{pmatrix}, E_1=\begin{pmatrix}
  0 & \sqrt{\gamma}  \\
  0 & 0
\end{pmatrix},
\]
 $E_i$ are Kraus operators satisfying $\sum_{i=0}^{1}E_i^\dag
E_i=I$ with $I$ being the identity operator, and $\gamma$ is the dissipation coefficient with $0<\gamma<1$. After passing through the amplitude damping channel, the energy of the qubit dissipates, causing a reduction in the component of the $|1\rangle $ state in the superposition state. 

We introduce noise after each quantum operation. Here, we provide only the noisy circuit simulations conducted for the target subfunction $f_0$. The scenarios involving the addition of noise to the circuits corresponding to $f_0$ and $f_{0,01}$ in {Figs. 5 and 9} are illustrated in \textbf{Figs. \ref{p13}} and \textbf{\ref{p14}}.  The cases of 1000 measurements on the circuits shown in {Figs. 15 and 16}, with the dissipation coefficient $\gamma=0.02$, are depicted in \textbf{Figs. \ref{p15}} and \textbf{\ref{p16}}. The algorithm's success rate is determined by the equation \begin{equation}\label{ss1}
p = \bar{p_1}\bar{p_2},\end{equation}where $\bar{p_1} $ represents the success rate of the first phase, and $\bar{p_2}$ denotes the success rate of the second phase. Therefore, based on the results shown in {Figs. 17 and 18}, we can deduce the probability of the algorithm's success as $p=0.653\times0.887=0.579211.$

\begin{figure*}[htbp]
		\centering
		\includegraphics[width=\linewidth]{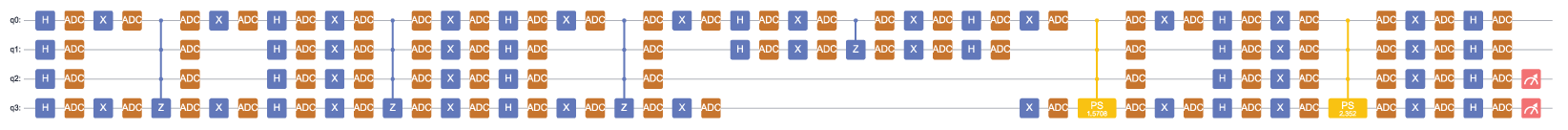}
		\setlength{\abovecaptionskip}{-0.01cm}
		\caption{The circuit for  $f_0$ with noise in the first phase of   IDGS algorithm.}
		\label{p13}
	\end{figure*}
	\begin{figure*}[htbp]
		\centering
		\includegraphics[width=\linewidth]{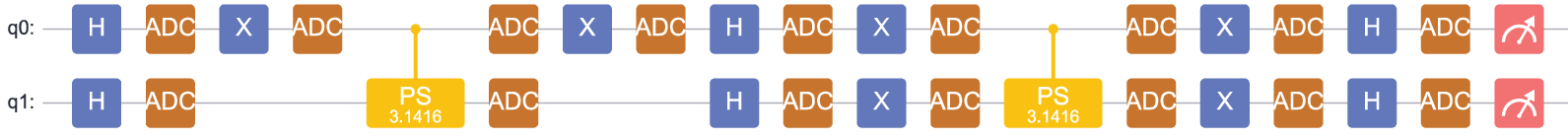}
		\setlength{\abovecaptionskip}{-0.01cm}
		\caption{The circuit for  $f_{0,01}$ with noise in the final phase of   IDGS algorithm. }
		\label{p14}
	\end{figure*}

\begin{figure*}[htbp]
		\centering
		\includegraphics[width=4in]{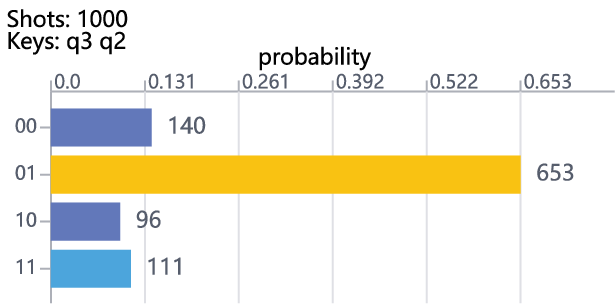}
		\setlength{\abovecaptionskip}{-0.01cm}
		\caption{Sampling results  of the circuit in {Fig. 15}.}
		\label{p15}
	\end{figure*}
\begin{figure*}[htbp]
		\centering
		\includegraphics[width=4in]{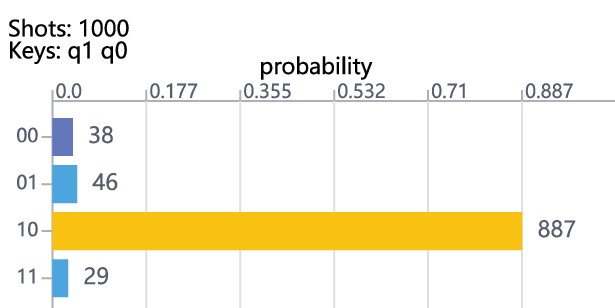}
		\setlength{\abovecaptionskip}{-0.01cm}
		\caption{Sampling results of the circuit in {Fig. 16}.}
		\label{p16}
	\end{figure*}
	
In the modified Grover’s algorithm by Long, amplitude damping noise is also incorporated, with its circuit  depicted in \textbf{Fig. \ref{p17}}. The values of the rotation angle parameter in {Fig. 19} is as follows:$$\omega=2\arcsin\left(\sin\left(\frac{\pi}{4\lfloor(\pi/2-\theta')/(2\theta')\rfloor+6}\right) / \sin  \theta'\right)\approx2.7648,$$
where $\theta'=\arcsin{\sqrt{\frac {1}{2^5}}}$.
Similarly, by setting the dissipation coefficient $\gamma=0.02$,  the outcomes of 1000 measurements of the circuit shown in {Fig. 19} are displayed in \textbf{Fig. \ref{p18}}. From the data presented in {Fig. 20}, it is observed that the success rate of the modified Grover’s algorithm is merely 29.4$\%$. To more effectively compare the noise resistance capabilities of the two algorithms, experiments are performed using a range of dissipation coefficients $\gamma\in\{0.001,0.005,0.007,0.010,0.020,0.030,0.040,0.050\}$. The outcomes of these experiments are depicted in \textbf{Fig. \ref{p19}}.  {Fig. 21} indicates that our algorithm maintains a relatively high success probability even under conditions of substantial noise, thereby demonstrating a certain degree of noise resilience inherent to the IDGS algorithm.
\begin{figure*}[htbp]
		\centering
		\includegraphics[width=\linewidth]{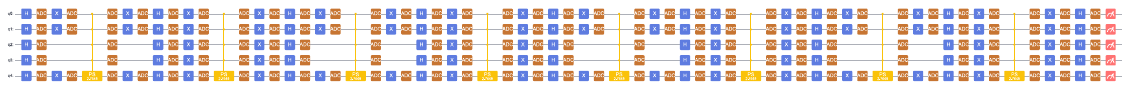}
		\setlength{\abovecaptionskip}{-0.01cm}
		\caption{The circuit  for the modified Grover’s algorithm targeting t = 01100 within a 5-qubit system.}
		\label{p17}
	\end{figure*}
	
	\begin{figure*}[htbp]
		\centering
		\includegraphics[width=2in]{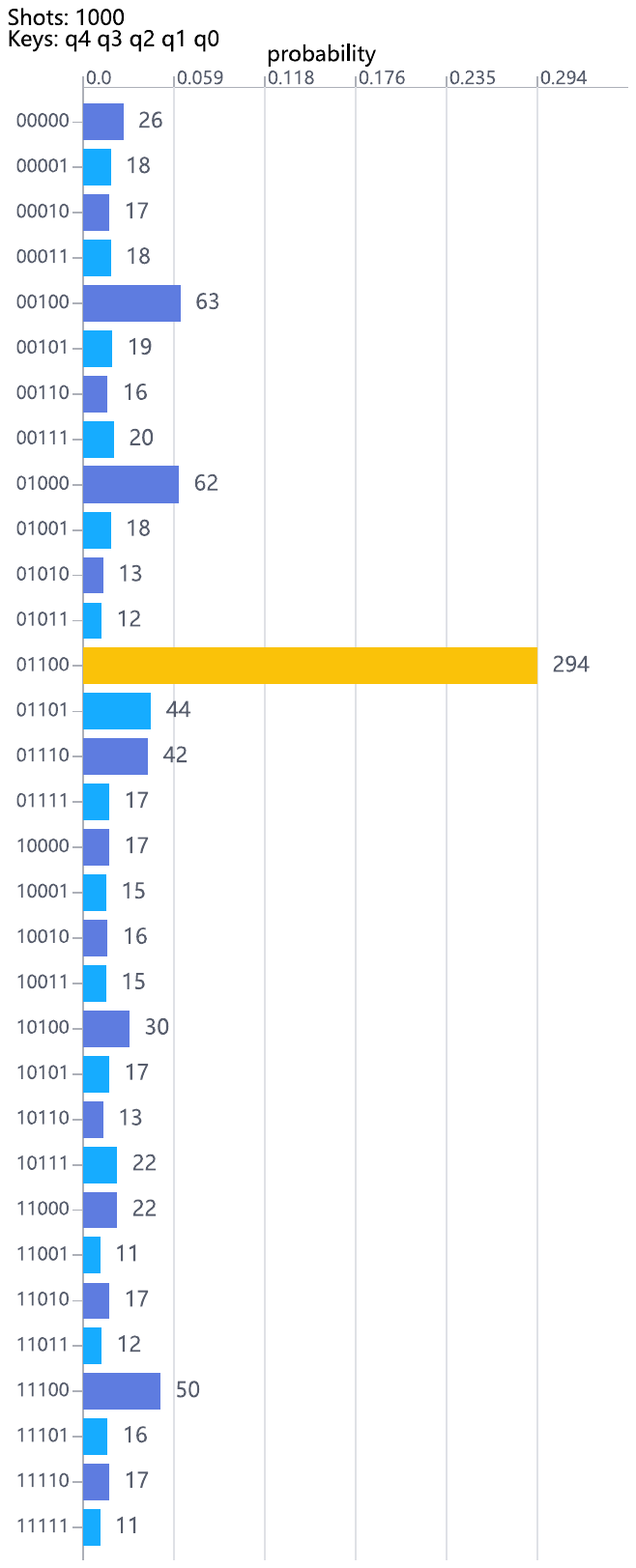}
		\setlength{\abovecaptionskip}{-0.01cm}
		\caption{Sampling results  of the circuit in {Fig. 19}.}
		\label{p18}
	\end{figure*}
	
	\begin{figure*}[htbp]
		\centering
		\includegraphics[width=4in]{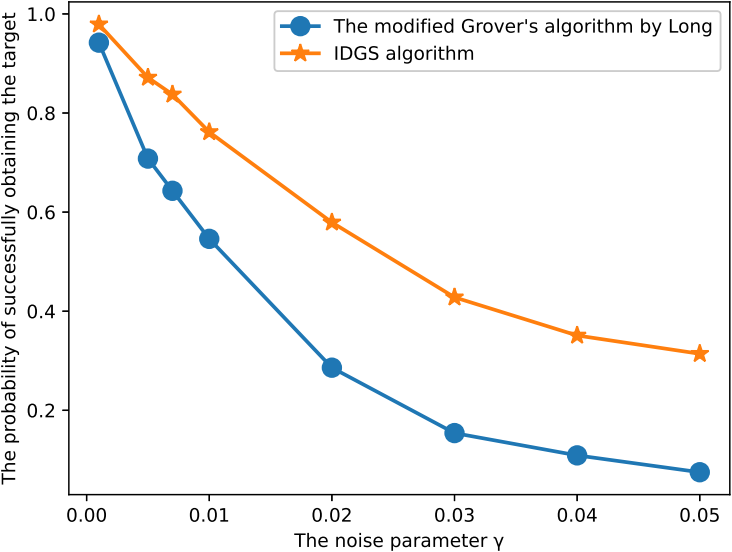}
		\setlength{\abovecaptionskip}{0.04cm}
		\caption{Success probability graph of the algorithm under amplitude damping noise ($n=5$).}
		\label{p19}
	\end{figure*}

\subsection{12-qubit system  under the phase damping channel}
Based on the previous experiments on the 12-qubit system, we consider the robustness of the algorithm under the phase damping channel.

The phase damping channel describes the loss of quantum information without being accompanied by energy loss\cite{Nielsen2001}.  For any quantum state $\rho$,  if it undergoes the phase damping channel $\Phi$, then it will become 
$$
\Phi (\rho )=\sum_{i=0}^{1} E_i \rho E_i^{\dagger}, 
$$
where
\[
E_0=\begin{pmatrix}
  1 & 0  \\
  0 &  \sqrt{1-\gamma}
\end{pmatrix}, E_1=\begin{pmatrix}
  0 &  0\\
  0 &  \sqrt{\gamma}
\end{pmatrix},
\]
 $E_i$ are Kraus operators satisfying $\sum_{i=0}^{1}E_i^\dag
E_i=I$ with $I$ being the identity operator, and  $0<\gamma<1$. 

We introduce noise after each quantum operation. We set the noise parameter $\gamma\in\{0.001, 0.002, 0.003, 0.005, 0.007\}$. Since the detailed circuit  for the 5-qubit system with added noise in the algorithm has been provided in the previous section, the situation for the larger-scale case is similar. However, the complete noise-inclusive circuit is relatively large, so we directly present the  results here. The success probability is defined as the ratio of the number of successful measurements to the total number of measurements.
  The success probability of our algorithm is based on Eq.(\ref{ss1}). Under the same noise parameter settings, we also tested the modified Grover’s algorithm by Long. The overall results are shown in Fig. \ref{pc}. As can be seen from the figure, our algorithm is more resistant to noise.
\begin{figure*}[htbp]
		\centering
		\includegraphics[width=4in]{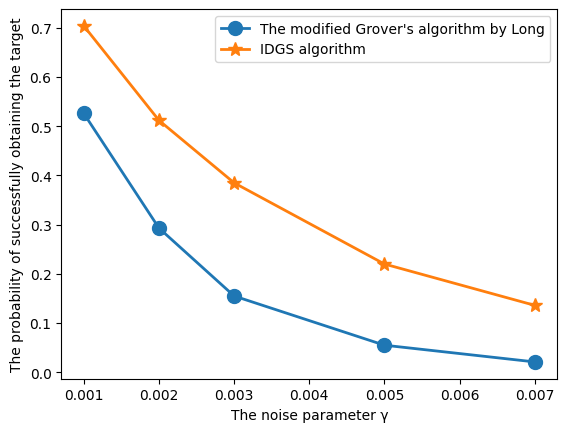}
		\setlength{\abovecaptionskip}{0.04cm}
		\caption{Success probability graph of the algorithm under phase damping noise ($n=12$).}
		\label{pc}
	\end{figure*}

\section{Conclusion}\label{s6}
In this article, we first proposed an exact quantum partial search algorithm. Then, we  present an algorithm for  constructing a query operator from Boolean functions to their subfunctions. 
By dividing the target string of the search problem into several substrings and integrating the query operator of each subfunction,  we introduced a precise low-depth distributed quantum search algorithm, i.e., IDGS algorithm.  IDGS algorithm  searches for the first $p$ qubits of the target each time, and then uses the found partial information  along with a modified version of Grover's algorithm to search for the remaining $n-p-k$ qubits. When $p=n-k$, IDGS algorithm is similar to the algorithm for single-objective cases in Ref. \cite{Qiu2022}, but our algorithm demonstrates enhancements in terms of accuracy.    Initially, we established the theoretical validity of our algorithm, followed by simulations with the quantum software MindQuantum, through which we successfully identified the target information with a probability 1. 
In testing the robustness of IDGS algorithm, we introduced amplitude damping  {and phase damping} noise into the experiments. It was observed that the IDGS algorithm maintained a certain probability of successfully locating the target, even in instances where the Long's modified version of Grover's algorithm failed to do so. This outcome demonstrates the noise-resistant capabilities of the IDGS algorithm.

Compared to exact search algorithms, IDGS algorithm can be implemented on small-scale quantum computers, and on each computing node, our algorithm possesses  fewer input qubits, and lower circuit depth.  Therefore, IDGS algorithm  make it a more viable option for real-world applications where computational resources and specific hardware capabilities may be limited.
Looking to the future, we would further consider  how to design a  low-depth distributed precise search algorithm with parallel execution for multiple objectives scenarios.

\section*{Declaration of competing interest}
The authors declare that they have no known competing financial interests or personal relationships that can have appeared to influence the work reported in this paper.
\appendix
\section{Proof of Proposition 1}
\label{app1}
\begin {proof}
Let  $$|\psi_2\rangle= |t^*\rangle\otimes\left(g_{t}|t'\rangle
+ b_{t} \sum_{x\neq t'} \frac{|x\rangle}{\sqrt{2^{n-q}-1}}\right)
+   \sum_{x^{''}\neq t^*}|x^{''}\rangle\otimes \frac{\sqrt{2^{n-q}}\cos \left(\left(2 j_1+1\right) \lambda\right)}{\sqrt{2^{n}-1}}|\varphi_{{n-q}}\rangle.$$ 
According to the algorithm process, in the  step 3 of the algorithm, there is
 \begin{align*}
 &G|\psi_2\rangle\\
 =&U_{n}U_{f}\left(|t^*\rangle\otimes\left(g_{t}|t'\rangle
+ b_{t} \sum_{x\neq t'} \frac{|x\rangle}{\sqrt{2^{n-q}-1}}\right)
+   \sum_{x^{''}\neq t^*}|x^{''}\rangle\otimes \frac{\sqrt{2^{n-q}}\cos \left(\left(2 j_1+1\right) \lambda\right)}{\sqrt{2^{n}-1}}|\varphi_{{n-q}}\rangle\right)\\
=&\left(2|\varphi_n\rangle\langle\varphi_n|-I_{n}\right)\left( -g_{t}|t^*t'\rangle
+  \sum_{x\neq t'} \frac{b_{t}|t^*x\rangle}{\sqrt{2^{n-q}-1}}
+   \sum_{x^{''}\neq t^*}|x^{''}\varphi_{{n-q}}\rangle \frac{\sqrt{2^{n-q}}\cos \left(\left(2 j_1+1\right) \lambda\right)}{\sqrt{2^{n}-1}}\right)\\
=&\frac{-2g_{t}}{\sqrt{2^{n}}}|\varphi_{{n}}\rangle+g_{t}|t^*t'\rangle
+\frac{2b_{t}(2^{n-q}-1)}{\sqrt{2^{n-q}-1}\sqrt{2^{n}}}|\varphi_{{n}}\rangle-b_{t} \sum_{x\neq t'} \frac{|t^*x\rangle}{\sqrt{2^{n-q}-1}}\\
+&\cos \left(\left(2 j_1+1\right) \lambda\right)\left[2\frac{ (2^{n}-2^{n-q})|\varphi_{{n}}\rangle}{\sqrt{2^{n}(2^{n}-1)}}- \sum_{x^{''}\neq t^*}\frac{\sqrt{2^{n-q}}|x^{''}\varphi_{n-q}\rangle}{\sqrt{2^{n}-1}}\right].\\
=&\left(g_{t}+\frac{2b_{t}\sqrt{2^{n-q}-1}-2g_{t}}{2^{n}}+2\cos \left(\left(2 j_1+1\right) \lambda\right)\frac{ 2^{n}-2^{n-q}}{2^{n}\sqrt{2^{n}-1}
}\right) |t^*t'\rangle\\
+&\left(-\frac{b_{t}}{\sqrt{2^{n-q}-1}}+\frac{2b_{t}\sqrt{2^{n-q}-1}-2g_{t}}{2^{n}}+2\cos \left(\left(2 j_1+1\right) \lambda\right)\frac{ 2^{n}-2^{n-q}}{2^{n}\sqrt{2^{n}-1}}\right)\sum_{x\neq t'} |t^*x\rangle\\
+&\left(\frac{\left(\sqrt{2^{n}-1}\right)\left(2b_{t}\sqrt{2^{n-q}-1}-2g_{t}\right)+\cos \left(\left(2 j_1+1\right) \lambda\right)\left(2^{n}-2^{n-q+1}\right)}
{2^{n}\sqrt{2^{n}-1}}\right)\sum_{x^{''}\neq t^*}\sqrt{2^{n-q}}|x^{''}\varphi_{n-q}\rangle.
\end{align*} 
Substituting Eq. (\ref{eqa cancellation different}), we obtain
 \begin{equation}\label{a10}
 \begin{aligned}
 G|\psi_2\rangle&=\left(g_{t}+\frac{2b_{t}\sqrt{2^{n-q}-1}-2g_{t}}{2^{n}}+2\cos \left(\left(2 j_1+1\right) \lambda\right)\frac{ 2^{n}-2^{n-q}}{2^{n}\sqrt{2^{n}-1}
}\right) |t^*t'\rangle\\
&+\left(-\frac{b_{t}}{\sqrt{2^{n-q}-1}}+\frac{2b_{t}\sqrt{2^{n-q}-1}-2g_{t}}{2^{n}}+2\cos \left(\left(2 j_1+1\right) \lambda\right)\frac{ 2^{n}-2^{n-q}}{2^{n}\sqrt{2^{n}-1}}\right) \sum_{x\neq t'} |t^*x\rangle.
 \end{aligned} 
\end{equation}
Thus, the proposition holds.
\end {proof}

\section{Proof of Proposition 2}\label{x2}
\begin {proof}
First, we prove that GRK quantum partial search algorithm can return the answer with probability 1, i.e.,  $j_1$ and $j_2$ satisfy the condition of Proposition \ref{p1}. Considering the case of a large database, where $2^{n}\rightarrow\infty$ and $q$ is a finite number, the following equation holds:
\begin{equation*}
\lambda=\arcsin{{\frac {1}{\sqrt {2^n}}}}\approx\frac {1}{\sqrt {2^n}}, \lambda'=\arcsin{\frac {1} {\sqrt{2^{n-q}}}}\approx\frac {1} {\sqrt{2^{n-q}}}.
\end{equation*}
Next, the following equation can be further derived:
\begin{equation}\label{a11} 
 (2j_1+1)\lambda=\frac {\pi}{2}-\frac {2\alpha_{q}}{\sqrt {2^q}}+\frac {1}{\sqrt {2^n}}\approx\frac {\pi}{2}-\frac {2\alpha_{q}}{\sqrt {2^q}},\quad 2j_2\lambda'=2\beta_{q}.
\end{equation}
Substituting Eq.  (\ref{a11}) into Eq.  (\ref{eqa cancellation different}), we have
 \begin{equation}\label{a12}
\tan\left(\frac{2\alpha_{q}}{\sqrt{2^q}}\right)=\frac{2\sqrt{2^q}\sin2\beta_{q}}{2^q-4\sin^2\beta_{q}}.
\end{equation}
Substituting  \begin{equation*}
\alpha_{q}=\frac{\sqrt {2^q}}{2}\arctan\left(\frac{\sqrt{3\cdot{2^q}-4}}{{2^q}-2}\right),\quad\quad \beta_{q}=\arcsin\sqrt{\frac{{2^q}}{4({2^q}-1)}},
\end{equation*} into Eq.  (\ref{a12}),  {Eq.  (\ref{eqa cancellation different})} holds true.  {Therefore,} the probability of obtaining the correct answer is 1.

The number of queries of the GRK quantum partial search algorithm is $S=j_1+j_2+1 \approx \frac \pi 4 {\sqrt{2^{n}}}+ (\beta_{q}-\alpha_{q}){\sqrt {2^{n-q}}}$. If $\beta_{q}-\alpha_{q}$ is minimized, then  $S$ is minimized.  
  
Since the number of queries at each step of the algorithm must be non-negative, and the total number of queries is  $$\frac \pi 4 {\sqrt{2^{n}}}+ (\beta_{q}-\alpha_{q}){\sqrt {2^{n-q}}}\le\frac \pi 4 {\sqrt{2^{n}}},$$  we have $0\le\beta_{q}\le\alpha_{q}\le\frac  \pi 4 {\sqrt{2^{q}}}$, which implies $0\le\frac{2\alpha_{q}}{\sqrt{2^q}}\le\frac {\pi} {2}$. From the above Eq. (\ref{a12}), we can obtain 
 \begin{equation}\label{a13}
\alpha_{q}=\frac{\sqrt{2^q}}{2} \arctan \left(\frac{2 \sqrt{2^q} \sin 2 \beta_{q}}{2^q-4 \sin ^2 \beta_{q}}\right). 
\end{equation} 
Therefore,  we only need to find the minimum value of the function \begin{equation}f(\beta_{q})=\label{a14}\beta_{q}-\frac{\sqrt{2^q}}{2} \arctan \left(\frac{2 \sqrt{2^q} \sin 2 \beta_{q}}{2^q-4 \sin ^2 \beta_{q}}\right).\end{equation} 
Differentiating \( f \), we obtain$$
f^{\prime}(\beta_{q})=\frac{16(2^{q}-1) \sin ^4 \beta_{q}- {2^{2q+2}} \sin ^2 \beta_{q}+{2^{2q}}}{16(2^{q}-1) \sin ^4 \beta_{q}- 2^{q+3} \sin ^2 \beta_{q}-{2^{2q}}}.
$$
 According to the value of $\beta_{q}$, we have $\sin\beta_{q}=\sqrt{\frac{{2^q}}{4({2^q}-1)}}$. Substituting this into the expression for $f^{\prime}(\beta_{q})$, we get  \( f^{\prime}(\beta_{q})= 0 .\)

Next, calculating the second derivative of \( f \) gives:
$$
f^{\prime \prime}(\beta_{q})=\frac{2^{q+2} \sin 2 \beta_{q}\left[4(2^{q}-1)(2^{q}-2) \cos ^2 2\beta_{q}+16(2^{q}-1) \cos 2 \beta_{q}+(2^{q}-2)^2(2^{q}+2)\right]}{\left[16(2^{q}-1) \sin ^4 \beta_{q}-2^{q+3}\sin ^2 \beta_{q}-2^{2q}\right]^2} .
$$
If \( q > 1 \), it can be seen from Eq. (\ref{628}) that  $\sin2\beta_{q}>0$ and $\cos2\beta_{q}>0$, thus \( f^{\prime \prime}(\beta_{q}) > 0 \). Therefore, 
\( f \) has a minimum value under Eq. (\ref{628}). If \( q =1 \), then 
$f^{\prime \prime}(\beta_{1})=0$ and 
the function 
 $$
f^{\prime}(\beta_{1})=\frac{(4 \sin ^2 \beta_{1}- 2)^2}{(4 \sin ^2 \beta_{1}- 2)^2-8}.
$$ Because \(0 \le\sin ^2 \beta_{1}\le1 \), it follows that \((4 \sin ^2 \beta_{1}- 2)^2-8<0\).  Therefore, \(f(\beta_{1})\) decreases as \(\beta_{1}\) increases. By combining  $\alpha_{1}$, the derivative of $\alpha_{1}$ can be obtained as follows:$$\alpha_{1}^{\prime}=\frac{2\sqrt{2}}{1+\left(\frac{2 \sqrt{2} \sin 2 \beta_{1}}{2-4 \sin ^2 \beta_{1}}\right)^2}.  $$Obviously, $\alpha_{1}^{\prime}>0$. Consequently, 
$\alpha_{1}$ increases as $\beta_{1}$ increases. Additionally, since $0\le\beta_{1}\le\alpha_{1}\le\frac  \pi 4 {\sqrt{2}}$, the maximum value of $\alpha_{1}$ is $\frac  \pi 4 {\sqrt{2}}$. At this point, $\beta_{1}$ is $\frac  \pi 4$, if $\beta_{1}$ increases, $\alpha_{1}$ will exceed $\frac  \pi 4 {\sqrt{2}}$. When $\beta_{1}=\frac{\pi} {4}$, $\alpha_{1}=\frac{\pi} { 2 \sqrt{2}}$, \(f(\beta_{1})\) has a minimum value. In summary, the proposition is proven.
\end {proof}

\section{Proof of Lemma 1}\label{x3}
\begin{proof}
To prove Eq. (\ref{sgzong14}), we  simplify it as follows 
\begin{align}
\left(a_{t^{\prime}}\cos \phi +E'\right)\left(\cos \theta +1\right)& =  a_{t^{\prime}}\sin\phi\sin\theta, \label{Xgzong1}\\
2^{n}F'+\left(a_{t^{\prime}}\cos \phi+E'\right)\left(\cos \theta -1\right)& =a_{t^{\prime}}\sin\phi\sin\theta,\label{Xgzong2}
\end{align}
where both sides of the first equation in Eq. (\ref{sgzong14}) are multiplied by $\cos \frac{\theta }{2}\neq 0$. Since $$\cos\theta +1 =\frac{2^{2n-1}F'^2-2^{n+1}E'F'+2E'^2-2a_{t^\prime}^2}{E'^2-2^{n}E'F'-a_{t^\prime}^2} , ~~\cos \theta-1 =\frac{2^{2n-1}F'^2}{E'^2-2^{n}E'F'-a_{t^\prime}^2}.$$Thus, calculating the left side of Eq. (\ref{Xgzong1}) yields $$2^{n}F'+\frac{2^{3n-2}F'^3}{E'^2-2^{n}E'F'-a_{t^\prime}^2},$$  and calculating the left side of Eq. (\ref{Xgzong2}) yields $$2^{n}F'+\frac{2^{3n-2}F'^3}{E'^2-2^{n}E'F'-a_{t^\prime}^2}.$$ Therefore, it suffices to prove that \begin{equation}\label{Xgzong3}
a_{t^{\prime}}\sin\phi\sin\theta=2^{n}F'+\frac{2^{3n-2}F'^3}{E'^2-2^{n}E'F'-a_{t^\prime}^2}.
\end{equation}
Simplifying the right side of Eq. (\ref{Xgzong3}) yields $$\frac{2^{n}F'\left(\left(E'-2^{n-1}F'\right)^2-a_{t^\prime}^2\right)}{E'^2-2^{n}E'F'-a_{t^\prime}^2}.$$ Since $F' > 0$ and the  inequality (\ref{gzong1}), it follows that $$\frac{2^{n}F'\left(\left(E'-2^{n-1}F'\right)^2-a_{t^\prime}^2\right)}{E'^2-2^{n}E'F'-a_{t^\prime}^2}>0.$$ We know that $$\sin\theta = \pm\sqrt{1-\cos^2\theta}=\pm 2^{n}F'\frac{\sqrt{a_{t^\prime}^2-\left(E'-2^{n-1}F'\right)^2}}{a_{t^\prime}^2-E'^2+2^{n}E'F'},$$ $$ \sin\phi= \pm \sqrt{1-\cos^2\phi}=\pm \sqrt{\frac{a_{t^\prime}^2-\left(E'-2^{n-1}F'\right)^2}{a_{t^\prime}^2}}.$$ Therefore, if $a_{t^{'}}>0$, then choose $\phi$ and $\theta$ such that $\sin\phi\sin\theta>0;$ if $a_{t^{'}}<0$, then choose $\phi$ and $\theta$ such that $\sin\phi\sin\theta<0.$ Consequently, Eq. (\ref{Xgzong3})  holds.
\end{proof}

\section{Proof of Theorem 1}\label{x4}
\begin{proof}
By step 3 of the algorithm, we have
\begin{equation}\label{zong121}
 \begin{aligned}
 G_g{G_1}^{ j_2^{'}}G^{j_1^{'}}|\varphi_n\rangle=&\frac{a_{t^{'}}(1-e^{\imath\theta })e^{\imath\phi }}{\sqrt{2^{n}}}|\varphi_{{n}}\rangle-e^{\imath\phi }a_{t^{'}}|t^*t'\rangle\\
&+\frac{a_{n_{t^{'}}}(2^{n-q}-1)(1-e^{\imath\theta })}{\sqrt{2^{n-q}-1}\sqrt{2^{n}}}|\varphi_{{n}}\rangle-\sum_{x\neq t'}\frac{a_{n_{t^{'}}} |t^*x\rangle}{\sqrt{2^{n-q}-1}}\\
&+\cos \left((2 j_1^{'}+1) \lambda\right)\left[\frac{ (2^{n}-2^{n-q})(1-e^{\imath\theta })|\varphi_{{n}}\rangle}{\sqrt{2^{n}(2^{n}-1)}}-\sum_{x^{''}\neq t^*}\frac{\sqrt{2^{n-q}}|x^{''}\varphi_{n-q}\rangle}{\sqrt{2^{n}-1}}\right].
\end{aligned} 
\end{equation}
Since $$|\varphi_{{n}}\rangle=\frac{ |t^*t'\rangle+\sum\limits_{x\neq t'} |t^*x\rangle+\sum\limits_{x^{''}\neq t^*}\sqrt{2^{n-q}}|x^{''}\varphi_{n-q}\rangle}{\sqrt{2^{n}}},$$ after the action of $G_g$, the amplitude of elements in non-target blocks is 
\begin{align}\label{zong1311}&\frac{a_{t^{'}}(1-e^{\imath\theta })e^{\imath\phi }}{2^{n}}+\frac{a_{n_{t^{'}}}(2^{n-q}-1)(1-e^{\imath\theta })}{2^{n}\sqrt{2^{n-q}-1}}\\ \nonumber
+&\frac{\cos \left(\left(2 j_1^{'}+1\right) \theta_1\right)(2^{n}-2^{n-q})(1-e^{\imath\theta })}{2^{n}\sqrt{2^{n}-1}}-\frac{\cos \left(\left(2 j_1^{'}+1\right) \lambda\right)}{\sqrt{2^{n}-1}}.\end{align}
Combining $E'$ and $F'$, a calculation  of Eq. (\ref{zong1311}) yields the following equation
\begin{align}\label{1zong1511}
&\frac{(1-e^{\imath\theta })(a_{t^{'}}e^{\imath\phi }+E')}{2^{n}}-F' \nonumber \\
=&\frac{(1-\cos \theta-\imath\sin\theta  )[a_{t^{'}}(\cos \phi+\imath\sin\phi )+E']}{2^{n}}-F' \nonumber\\
=&\frac{(1-\cos \theta)(a_{t^{'}}\cos \phi+E' )+a_{t^{'}}\sin\theta\sin\phi }{2^{n}}-F'\nonumber\\
&+\imath\frac{[a_{t^{'}}(1-\cos \theta)\sin\phi -a_{t^{'}}\sin\theta \cos \phi-E'\sin\theta ]}{2^{n}}.
\end{align}
Finally, substituting Eq.  (\ref{sgzong14})  into Eq. (\ref{1zong1511}) yields  $$\frac{(1-e^{\imath\theta })(a_{t^{'}}e^{\imath\phi }+E')}{2^{n}}-F'=0,$$i.e.,  the non-target block amplitude is 0. Hence, the target  is finally obtained by measurement. 
\end{proof}

\section{Proof of Lemma 2}\label{x5}
\begin {proof}  
 In the basis $\{\sum\limits_{\substack{x=0
 \\x \neq x_{ \widetilde{i}}y_t}}^{2^{n-k}-1} \frac{|x\rangle}{\sqrt{2^{n-k}-1}},  |x_{ \widetilde{i}}y_t\rangle\}$,  the matrix form of $G_{2}$ is $$
\begin{pmatrix}
\cos 2\theta_{1} & -\sin 2\theta_{1}  \\
\sin 2\theta_{1} & \cos 2\theta_{1}
\end{pmatrix}
.$$
The eigenvalues of $G_{2}$ are $\lambda_{1}^{ \pm}=\exp \left[ \pm 2 \imath \theta_{1}\right]$, and the eigenvectors corresponding to the eigenvalues 
$\lambda_{1}^{ \pm}$ are $( \frac{\imath}{\sqrt{2}}, \frac{1}{\sqrt{2}})^T$ and $( \frac{-\imath}{\sqrt{2}}, \frac{1}{\sqrt{2}})^T$, where 
$T$ denotes the transpose of a vector. Expressing the eigenvectors in terms of the basis $\{\sum\limits_{\substack{x=0
 \\x \neq x_{ \widetilde{i}}y_t}}^{2^{n-k}-1} \frac{|x\rangle}{\sqrt{2^{n-k}-1}},  |x_{ \widetilde{i}}y_t\rangle\}$ gives $$\left|\psi^{ \pm}_{1}\right\rangle =\frac{1}{\sqrt{2}}|x_{ \widetilde{i}}y_t\rangle \pm \frac{\imath}{\sqrt{2}}\left(\sum\limits_{\substack{x=0
 \\x \neq x_{ \widetilde{i}}y_t}}^{2^{n-k}-1} \frac{|x\rangle}{\sqrt{2^{n-k}-1}}\right).$$
Thus, we have $$G_{2}\left|\psi^{ \pm}_{1}\right\rangle=\lambda_{1}^{ \pm}\left|\psi^{ \pm}_{1}\right\rangle. $$
Therefore, the state after the {step 3} is 
\begin{equation}  \label{G1}
G_{2}^{p_{1}}|\varphi_{{n-k}}\rangle=\sin \left(\left(2 p_{1}+1\right) \theta_{1}\right)|x_{ \widetilde{i}}y_t\rangle+ \sum_{\substack{x=0 \\
x \neq x_{ \widetilde{i}}y_t}}^{2^{n-k}-1}  \frac{\cos \left(\left(2 p_{1}+1\right) \theta_{1}\right)|x\rangle}{\sqrt{2^{n-k}-1}}.
\end{equation}  For $\{0,1\}^{n-k}$, based on whether the first $p$ bits are the same, we provide the following $2^{p}$ blocks:  $$ \overbrace{0 \cdots000}^{p}y_1,~ \overbrace{0 \cdots001}^{p}y_2, ~\cdots, ~ \overbrace{1 \cdots111}^{p}y_{2^{p}}, y_{m} \in \{0,1\}^{n-p-k}, m=1,\cdots,2^{p}.$$ 
According to this partitioning method, combining the right side of Eq. (\ref{G1}), $|\psi_{\widetilde{i}_{n-k}}\rangle$ can be expressed as Eq. (\ref{G2}).
\end{proof}

\section{Proof of Lemma 3}\label{x6}
\begin {proof} 
 In the basis $\{|x_{ \widetilde{i}}\rangle\otimes\sum\limits_{x^{\prime}\neq y_t} \frac{|x^{\prime}\rangle}{\sqrt{2^{n-p-k}-1}},  |x_{ \widetilde{i}}y_t\rangle\}$,  the matrix form of $G_{3}$ is $$
\begin{pmatrix}
\cos 2\theta' & -\sin 2\theta'  \\
\sin 2\theta' & \cos 2\theta'
\end{pmatrix}
.$$
 Similar to the methods for finding eigenvalues and eigenvectors in Lemma \ref{l2}, it is not difficult to obtain the eigenvalues of $G_{3}$ and  corresponding eigenvectors as follows: 
 $$G_3\left|\psi^{ \pm}_0\right\rangle=\lambda_2^{ \pm}\left|\psi^{ \pm}_0\right\rangle, \lambda_2^{ \pm}=\exp \left( \pm 2 \imath \theta'\right), $$
$$\left|\psi^{ \pm}_0\right\rangle =|x_{ \widetilde{i}}\rangle\otimes\left[\frac{1}{\sqrt{2}}|y_t\rangle \pm \frac{\imath}{\sqrt{2}}\left(\sum\limits_{x^{\prime}\neq y_t} \frac{|x^{\prime}\rangle}{\sqrt{2^{n-p-k}-1}}\right)\right].$$
From the eigenvectors of $G_3$, we can derive the following expression:$$|x_{ \widetilde{i}}y_t\rangle=\frac{\sqrt{2}}{2}\left(|\psi^{+}_0\rangle+|\psi^{-}_0\rangle\right), |x_{ \widetilde{i}}\rangle\otimes\sum\limits_{x^{\prime}\neq y_t} \frac{|x^{\prime}\rangle}{\sqrt{2^{n-p-k}-1}}=\frac{\imath\sqrt{2}}{2}\left(|\psi^{-}_0\rangle-|\psi^{+}_0\rangle\right).$$
  When $G_3$ acts on target block, we have
\begin{align*}
 &G_{3}^{p_2}\left(|x_{ \widetilde{i}}\rangle\otimes\left(\sin \left(\left(2 p_1+1\right) \theta_1\right)|y_t\rangle+\sum_{x^{\prime}\neq y_t} \frac{\cos \left(\left(2 p_1+1\right) \theta_1\right)}{\sqrt{2^{n-k}-1}}|x^{\prime}\rangle\right)\right)\\
 =&G_{3}^{p_2}\left(\sin \left(\left(2 p_1+1\right) \theta_1\right)|x_{ \widetilde{i}}y_t\rangle+|x_{ \widetilde{i}}\rangle\otimes\sum\limits_{x^{\prime}\neq y_t} \frac{\cos \left(\left(2 p_1+1\right) \theta_1\right)|x^{\prime}\rangle}{\sqrt{2^{n-p-k}-1}}\right)\\
 =&G_{3}^{p_2}\left(\frac{\sin \left(\left(2 p_1+1\right) \theta_1\right)}{{\sqrt{2}}}\left(|\psi^{+}_0\rangle+|\psi^{-}_0\rangle\right)+ \frac{\imath{\sqrt{2^{n-p-k}-1}}\cos \left(\left(2 p_1+1\right) \theta_1\right)}{\sqrt{2}\sqrt{2^{n-k}-1}}\left(|\psi^{-}_0\rangle-|\psi^{+}_0\rangle\right)\right)\\
  =&\frac{\sin \left(\left(2 p_1+1\right) \theta_1\right)}{{\sqrt{2}}}\left(e^{2 \imath{p_2} \theta'}|\psi^{+}_0\rangle+e^{-2 \imath{p_2} \theta'}|\psi^{-}_0\rangle\right)\\
  +& \frac{\imath{\sqrt{2^{n-p-k}-1}}\cos \left(\left(2 p_1+1\right) \theta_1\right)}{\sqrt{2}\sqrt{2^{n-k}-1}}\left(e^{-2 \imath{p_2} \theta'}|\psi^{-}_0\rangle-e^{2 \imath{p_2} \theta'}|\psi^{+}_0\rangle\right)\\
  =&\left[\sin \left(\left(2 p_1+1\right) \theta_1\right)\cos \left(2 p_2 \theta'\right)+\frac{{\sqrt{2^{n-p-k}-1}}\cos \left(\left(2 p_1+1\right) \theta_1\right)\sin \left(2 p_2 \theta'\right)}{\sqrt{2^{n-k}-1}}\right]|x_{ \widetilde{i}}y_t\rangle\\
  +&\left[-\sin \left(\left(2 p_1+1\right) \theta_1\right)\sin \left(2 p_2 \theta'\right)+\frac{{\sqrt{2^{n-p-k}-1}}\cos \left(\left(2 p_1+1\right) \theta_1\right)\cos \left(2 p_2 \theta'\right)}{\sqrt{2^{n-k}-1}}\right]\sum\limits_{x^{\prime}\neq y_t} \frac{|x_{ \widetilde{i}}x^{\prime}\rangle}{\sqrt{2^{n-p-k}-1}}\\
  =&|x_{ \widetilde{i}}\rangle\otimes\left(a_{t}|y_t\rangle+a_{n_t}\sum_{x^{\prime}\neq y_t} \frac{|x^{\prime}\rangle}{\sqrt{2^{n-p-k}-1}}\right).
\end{align*}When $G_3$ acts on each non-target block, we have 
\begin{align*}
& G_{3} \left(|\psi_{{p}}\rangle\otimes \frac{\sqrt{2^{n-p-k}}\cos \left(\left(2 p_1+1\right) \theta_1\right)}{\sqrt{2^{n-k}-1}}|\varphi_{{n-p-k}}\rangle\right)\\
 =& \frac{|\psi_{{p}}\rangle\otimes\sqrt{2^{n-p-k}}\cos \left(\left(2 p_1+1\right) \theta_1\right)U_{n-p-k}|\varphi_{{n-p-k}}\rangle}{\sqrt{2^{n-k}-1}}\\
=&|\psi_{{p}}\rangle\otimes \frac{\sqrt{2^{n-p-k}}\cos \left(\left(2 p_1+1\right) \theta_1\right)}{\sqrt{2^{n-k}-1}}|\varphi_{{n-p-k}}\rangle.
\end{align*}
Thus we have $G_{3}^{p_2}|\psi_{\widetilde{i}_{n-k}}\rangle=|\psi_{{\widetilde{i}}_{n-k}}^{\prime}\rangle.$

\end{proof}


\bibliographystyle{elsarticle-num} 
 \bibliography{chapter1}

\end{document}